  \noindent \textsl{\@conferencename}\quad \@conferenceinfo
\DeclareSymbolFont{letters}{OML}{txmi}{m}{it}
\DeclareMathAlphabet{\mathcal}{OMS}{cmsy}{m}{n}
\newcommand{\hsapp}{\mathbin{+\mkern-7mu+}}
\newcommand{\hsbind}{\mathbin{>\mkern-6mu>\mkern-6.5mu=}}
\newcommand{\To}{\mathbin{\Rightarrow}}
\newcommand{\U}{\ensuremath{\mathcal{U}}}
\newcommand{\D}{\ensuremath{\mathcal{D}}}
\newcommand{\below}{\sqsubseteq}
\newcommand{\univ}[1]{\ensuremath{\underline{#1}}}
\newcommand{\REP}[1]{\ensuremath{\llbracket#1\rrbracket}}
\newcommand{\symlbrace}{\mathopen{\lbrace\mkern-4.5mu\mid}}
\newcommand{\symrbrace}{\mathclose{\mid\mkern-4.5mu\rbrace}}
\newcommand{\TC}[1]{\ensuremath{\symlbrace#1\symrbrace}}
\newcommand{\hsone}{\mathbf{1}}
\newcommand{\hair}{\mskip1mu}
\newcommand{\mempty}{\varnothing}
\newcommand{\mappend}{\mathrel{\bullet}}
\newcommand{\kwd}[1]{\mathbf{#1}}
\newcommand{\hsc}[1]{\ensuremath{\mathit{#1}}}
\newcommand{\hsid}{\hsc{id}}
\newcommand{\hsemb}{\hsc{emb}}
\newcommand{\hsprj}{\hsc{proj}}
\newcommand{\hsproj}{\hsc{proj}}
\newcommand{\hsRep}{\hsc{Rep}}
\newcommand{\hscoerce}{\hsc{coerce}}
\newcommand{\fmap}{\,\hsc{fmap}} 
\newcommand{\fmapU}{\univ{\fmap}}
\newcommand{\returnU}{\univ{\hsc{return}}}
\newcommand{\hsbindU}{\mathbin{\univ{\hsbind}}}
\newcommand{\hsappU}{\mathbin{\univ{\hsapp}}}
\newcommand{\runET}{\hsc{runET}}
\newcommand{\mapLift}{\hsc{map_\bot}}
\newcommand{\mapSum}{\hsc{map_\oplus}}
\newcommand{\mapProd}{\hsc{map_\otimes}}
\newcommand{\mapFun}{\hsc{map_\to}}
\newcommand{\embedding}{\hsc{in}}
\newcommand{\projection}{\hsc{out}}
\newcommand{\INV}{\text{\textsc{Inv}}}
\newcommand{\tA}{\alpha}
\newcommand{\tB}{\beta}
\newcommand{\tC}{\gamma}
\newcommand{\tD}{\delta}
\newcommand{\tE}{\varepsilon}
\newcommand{\tT}{\tau}
\newcommand{\tW}{\omega}
\newcommand{\isodefl}{\Vdash}
\newcommand{\defeq}{\stackrel{\mathrm{def}}{=}}
\newcommand{\eppair}{\stackrel{ep}{\to}}
\newcommand{\justification}[1]{\{\text{#1}\}}
\newtheorem{theorem}{Theorem}
\theoremstyle{definition}
\newtheorem{definition}{Definition}
\begin{document}

\conferenceinfo{ICFP'12,} {September 9--15, 2012, Copenhagen, Denmark.}
\copyrightyear{2012}
\copyrightdata{978-1-4503-1054-3/12/09}
\permission{\vtop{\vskip-1.5ex\hbox{\includegraphics[scale=0.75]{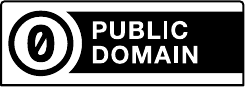}}} \hfill \parbox[t]{6.45cm}{To the extent possible under law, Brian Huffman has waived all copyright and related or neighboring rights to ``Formal Verification of Monad Transformers''. This work is published from: Germany.}}


\title{Formal Verification of Monad Transformers}

\authorinfo{Brian Huffman}{Institut f\"ur Informatik, Technische Universit\"at M\"unchen}{huffman@in.tum.de}

\maketitle

\begin{abstract}
We present techniques for reasoning about constructor classes that (like the monad class) fix polymorphic operations and assert polymorphic axioms. We do not require a logic with first-class type constructors, first-class polymorphism, or type quantification; instead, we rely on a domain-theoretic model of the type system in a universal domain to provide these features.

These ideas are implemented in the Tycon library for the Isabelle theorem prover, which builds on the HOLCF library of domain theory. The Tycon library provides various axiomatic type constructor classes, including functors and monads. It also provides automation for instantiating those classes, and for defining further subclasses.

We use the Tycon library to formalize three Haskell monad transformers: the error transformer, the writer transformer, and the resumption transformer. The error and writer transformers do not universally preserve the monad laws; however, we establish datatype invariants for each, showing that they are valid monads when viewed as abstract datatypes.
\end{abstract}

\category{F.3.1}{Logics and Meanings of Programs}{Specifying and Verifying and Reasoning about Programs -- mechanical verification.}

\keywords{denotational semantics, monads, polymorphism, theorem proving, type classes}

\section{Introduction}

As a pure functional language, Haskell promises to work well for equational reasoning and proofs. Having programs and libraries that satisfy equational laws is important, because it lets programmers think about the correctness of their code in a modular and composable way.

Type classes are a valuable abstraction mechanism for writing reusable code in Haskell. Many Haskell type classes also have laws associated with them. Haskell programs that use these type classes often rely on the assumption that the laws hold. For example, a library might implement a datatype of balanced search trees, with elements of type $\tA$. To permit comparisons between elements, the search tree operations use the class constraint $\hsc{Ord}\:\tA$, which provides the comparison operator $(\le)::\tA\to\tA\to\hsc{Bool}$. But just having an operation of the right type is not enough: For the operations to work correctly, the library requires $(\le)$ to satisfy some additional properties, e.g. that $(\le)$ is a \emph{total} order.

Much Haskell code is written with equational properties in mind: Programs, libraries, and class instances may be expected to satisfy some laws, but unfortunately, there is no formal connection between programs and properties in Haskell. Haskell compilers are not able to check that properties hold. One way to get around this limitation is to verify our Haskell programs in an interactive proof assistant, or theorem prover.

\paragraph{Isabelle/HOL.}

Isabelle/HOL (or simply ``Isabelle'') is a generic interactive theorem prover, with tools and automation for reasoning about inductive datatypes and terminating functions in higher-order logic \cite{isabelle-tutorial}. Isabelle has an ML-like type system extended with \emph{axiomatic} type classes \cite{Wenzel1997}. In Isabelle, a type class fixes one or more overloaded constants, just like in Haskell. But Isabelle also allows us to specify additional \emph{class axioms} about those constants.

As an example, here we have an axiomatic class \hsc{Ord} that fixes an order relation $(\le)$ and asserts that it is a total order:
\begin{align*}
  & \kwd{class}\:\hsc{Ord}\:\tA\:\kwd{where} \\[-\jot]
  & \hspace{8pt} (\le)::\tA\to\tA\to\hsc{Bool} \\
  & \hspace{8pt}
  \begin{aligned}
  & 
  {x}\le{x} \\[-\jot]
  & 
  {x}\le{y}\;\wedge\; {y}\le{z}\implies{x}\le{z} \\[-\jot]
  & 
  {x}\le{y} \;\wedge\; {y}\le{x}\implies{x}={y} \\[-\jot]
  & 
  {x}\le{y} \;\vee\; {y}\le{x}
  \end{aligned}
\end{align*}
(Note that free variables appearing in class axioms are treated as universally quantified.) To establish an instance of class \hsc{Ord} in Isabelle, a user must not only provide definitions of the class operations, but also proofs that the operations satisfy the class axioms.

\paragraph{Isabelle/HOLCF.}

Isabelle/HOLCF is a library of domain theory, formalized within the logic of Isabelle/HOL \cite{holcf99,holcf11}. It is designed to support denotational reasoning about programs written in pure functional languages like Haskell. HOLCF can deal with programs that are beyond the scope of Isabelle/HOL's automation: HOLCF provides tools for defining and working with (possibly lazy) recursive datatypes, general recursive functions, partial and infinite values, and least fixed-points. These features make Isabelle/HOLCF a useful system for reasoning about a significant subset of Haskell programs. With the combination of HOLCF and axiomatic classes, users can directly formalize many Haskell programs that use ad hoc overloading, and verify generic programs that may rely on laws for class instances.

\paragraph{Type constructor classes.}

In addition to ordinary type classes, Haskell also supports type \emph{constructor} classes. An ordinary type constraint like $\hsc{Ord}\:\tA$ involves a type variable $\tA :: *$. The operations in such a type class have relatively simple types like $(\le) :: (\hsc{Ord}\:\tA) \To \tA \to \tA \to \hsc{Bool}$, where no other type variables besides the one in the class constraint are mentioned. That is, for a specific class instance, the operations are monomorphic: e.g. to define an instance $\hsc{Ord}\;\hsc{Int}$, we have an operation $(\le^\hsc{Int}) :: \hsc{Int} \to \hsc{Int} \to \hsc{Bool}$. (In other words, a dictionary for class \hsc{Ord} contains only monomorphic functions.) On the other hand, a constructor class like $\hsc{Functor}\:\tT$ fixes a type variable of a higher kind, in this case $\tT :: * \to *$. Furthermore, the operations in a constructor class are usually polymorphic. For example, $\fmap :: (\hsc{Functor}\:\tT) \To (\tA \to \tB) \to \tT\:\tA \to \tT\:\tB$ also is polymorphic over the type variables $\tA$ and $\tB$. The laws for the functor class are likewise polymorphic: For functors we usually assume that $\fmap\;\hsid = \hsid$ and $\fmap\;(f \circ g) = \fmap\;f \circ \fmap\;g$. For a specific functor class instance, these laws can be instantiated at various types. For a proper, law-abiding functor, we expect these laws to hold at \emph{all} possible type instantiations.

These additional requirements pose some real challenges for formal verification. While Isabelle has built-in support for ordinary axiomatic type classes, its type system does not natively support axiomatic constructor classes or type quantification---in fact, it does not even support higher-kinded type variables at all. Other interactive theorem provers exist with stronger type systems (e.g. Coq), but switching would mean giving up all the special support for reasoning about strictness, partial values, and general recursion in HOLCF. Coq and similar provers use a logic of total, terminating functions; thus proofs conducted in them are only applicable to the total, terminating fragment of the Haskell language.

\paragraph{Contributions.}

Using a universal domain and a domain-theoretic model of types, we construct a library for Isabelle/HOLCF that gives users first-class type constructors and axiomatic constructor classes. Users can instantiate constructor classes by defining the constants and proving the class axioms at a single type. Using a combination of type coercions and naturality laws, theorems can then be transferred automatically to other type instances.

This work builds upon and improves an earlier formalization of constructor classes in Isabelle, which was joint work with Matthews and White \cite{HMW2005}. While some concepts (e.g. representable types, the type application operator, and coercions) remain unchanged, this paper also introduces several new contributions:

\begin{itemize}
\item New simplified definition of class \hsc{Functor}
\item Fully automatic tools for constructing \hsc{Functor} class instances
\item A general, practical method for defining subclasses of \hsc{Functor}
\item Automation for transferring theorems between types, using coercions and naturality laws
\item Verification of error and writer monad transformers as abstract datatypes
\end{itemize}

\paragraph{Outline.}

The remainder of the paper is organized as follows. We begin by reviewing relevant information about HOLCF: After a summary of basic domain-theoretic concepts (\S\ref{sec:holcf}), we discuss the deflation model used to represent types in HOLCF (\S\ref{sec:deflation-model}). The next sections cover the implementation of the Tycon library: We show how to define the various constructor classes (\S\ref{sec:constructor-classes}), and then how to instantiate them (\S\ref{sec:instantiation}). Next we discuss the verification of monad transformers with Tycon (\S\ref{sec:monad-transformers}). Finally, we conclude with a discussion of related and future work (\S\ref{sec:conclusion}).

\section{Domain theory in HOLCF}
\label{sec:holcf}

We now review the basic domain theory definitions used in HOLCF. A partial order is a set or type with a binary relation $(\below)$ that is reflexive, transitive, and antisymmetric. A chain is a countable increasing sequence: $x_0 \below x_1 \below x_2 \below \dots$ A complete partial order (cpo) is a partial order where every chain has a least upper bound (lub). An admissible predicate $P$ holds for the lub of a chain whenever it holds over the entire chain: $\forall{n}.\,P(x_n) \Longrightarrow P\left(\bigsqcup_n x_n\right)$. A continuous function $f$ preserves lubs of chains: $f\left(\bigsqcup_n x_n\right) = \bigsqcup_n f(x_n)$. Note also that every continuous function is monotone. A pointed cpo (pcpo) or ``domain'' is a cpo with a least element $\bot$. Every continuous function $f$ on a pcpo has a least fixed-point $\mathit{fix}(f) = f(\mathit{fix}(f)) = \bigsqcup_n f^n(\bot)$. In this paper we also use the binder notation $\mu\hair{x}.\,f(x)$ to denote the least fixed-point of $f$.

HOLCF provides a few primitive type constructors, which correspond to basic domain constructions. First, we have the continuous function space $\tA\to\tB$, which consists of the continuous functions from $\tA$ to $\tB$ ordered pointwise; this type is used to model Haskell's function space. Other constructions include strict sums, strict products, and lifting. They correspond to the Haskell datatype definitions here:
\begin{align*}
& \kwd{data}\:\tA\oplus\tB = \hsc{SLeft}\:!\tA \mid \hsc{SRight}\:!\tB \\
& \kwd{data}\:\tA\otimes\tB = \hsc{SPair}\:!\tA\:!\tB \\
& \kwd{data}\:\tA_\bot = \hsc{Lift}\:\tA
\end{align*}
Note that the constructors for $\tA\oplus\tB$ and $\tA\otimes\tB$ are strict, but the constructor for type $\tA_\bot$ is non-strict: $\hsc{Lift}\:\bot \neq \bot$. Finally, HOLCF provides the type $\hsone$, with two elements $\bot \below ()$; this models Haskell's unit type $()$.

\subsection{The Domain package}

Constructing recursive datatypes is one important application of domain theory. In HOLCF, user-defined recursive datatypes can be specified using the Domain package \cite{holcf11}. It can model many of the same datatypes that we can define in Haskell, e.g., lazy lists:
\begin{equation*}
\kwd{data}\:List\:\tA = \hsc{Nil} \mid \hsc{Cons}\:\tA\:(\hsc{List}\:\tA)
\end{equation*}
Given this datatype specification, the job of the Domain package is to construct a solution to the corresponding domain equation.
\begin{equation}
\hsc{List}\:\tA \cong \hsone \oplus (\tA_\bot \otimes (\hsc{List}\:\tA)_\bot)
\end{equation}

Since Isabelle 2011, the Domain package is completely definitional: It explicitly constructs a solution to this equation and defines $\hsc{List}\:\tA$ without introducing any new axioms \cite{holcf11}. In addition to the type itself, the Domain package also defines the constructor functions and several other related constants. It also generates a large collection of useful lemmas and rewrite rules, including injectivity and exhaustiveness of constructors, and rules for order comparisons like this one:
\begin{equation}
\hsc{Cons}\;x\;xs \below \hsc{Cons}\;y\;ys \Longleftrightarrow x \below y \wedge xs \below ys
\end{equation}

We also get some induction rules generated for us: Every type gets a low-level induction principle in the form of an approximation lemma \cite{Hutton2001}. For polynomial types (i.e., those expressible as a sum of products) we also get a high-level induction rule, with cases for each constructor plus a case for $\bot$. Induction rules for lazy datatypes have an admissibility side-condition.
\begin{equation}
\inferrule
{\mathrm{admissible}(P) \\ P(\bot) \\ P(\hsc{Nil}) \\
  \forall{x}\;\hsc{xs}.\;P(\hsc{xs}) \Longrightarrow P(\hsc{Cons}\;x\;\hsc{xs})}
{\forall\hsc{xs}.\;P(\hsc{xs})}
\end{equation}

The Domain package can handle any datatype expressible in Haskell---subject to the limitations of Isabelle's type system, of course. It supports both strict and lazy constructors, mutual recursion, indirect recursion, and even negative recursion.
\begin{align*}
  & \kwd{data}\;\hsc{StrictList}\;\tA =
  \hsc{SNil} \mid \hsc{SCons}\;\tA\;!(\hsc{StrictList}\;\tA) \\
  & \kwd{data}\;\hsc{Indirect}\;\tA =
  \hsc{Leaf}\;\tA \mid \hsc{Node}\;(\hsc{List}\;(\hsc{Indirect}\;\tA)) \\
  & \kwd{data}\;\hsc{Neg} =
  \hsc{App}\;\hsc{Neg}\;\hsc{Neg} \mid \hsc{Lam}\;(\hsc{Neg} \to \hsc{Neg})
\end{align*}
For formalizing Haskell record definitions, it also conveniently supports selector functions for constructor arguments.

\subsection{Notation}

We avoid using Isabelle notation as much as possible, favoring a Haskell-style syntax for datatype and function definitions. We also use Haskell-style notation $(\dots) \To \dots$ for class constraints on type variables. Isabelle's syntax follows Standard ML in writing type constructors postfix; however, for consistency we use Haskell-style prefix type application throughout. Isabelle type constructors with multiple arguments are shown as tupled.

We consistently use Greek letters $\tA$, $\tB$, $\tC$ for type variables of kind $*$, and $\tT$ for kind $* \to *$. Latin letters $a, b, c$ are program variables, with $f, g, h, k$ referring specifically to functions. We often use sub- and superscripts to annotate polymorphic functions with their types; e.g., $\fmap^\tT_{\tA,\,\tB}$ means $\fmap :: (\tA \to \tB) \to \tT\;\tA \to \tT\;\tB$.

\section{Deflation model of types}
\label{sec:deflation-model}

HOLCF provides a special domain $\D$ whose values are \emph{deflations}, a certain kind of idempotent functions. Deflations are used to model types: To each ``representable'' domain type in HOLCF, we associate a representation of type $\D$. The primary reason for having this model in HOLCF is to implement the Domain package: The deflation model of types lets us reason about the existence of solutions to domain equations, because we can construct recursively defined deflations to represent them.

The Tycon library takes advantage of this existing model of types to derive further benefits. The deflation model gives us a way to express the relationship between different type instances of polymorphic functions, letting us reason about polymorphism. It also lets us reason about type quantification, by quantifying over deflations.

In the remainder of this section, we describe the underlying concepts behind the deflation model, as well as its implementation in HOLCF.

\subsection{Embedding-projection pairs and deflations}

Some cpos can be embedded within other cpos. The concept of an \emph{embedding-projection pair} (often shortened to \emph{ep-pair}) formalizes this notion.
\begin{definition}
Continuous functions $e :: \tA\to\tB$ and $p :: \tB\to\tA$ form an \emph{embedding-projection pair} (or \emph{ep-pair}) if $p \circ e = \hsc{id}_\tA$ and $e \circ p \below \hsid_\tB$. In this case, we write $(e, p) : \tA \eppair \tB$.
\end{definition}

Ep-pairs have many useful properties: $e$ is injective, $p$ is surjective, both are strict, each function uniquely determines the other, and the image of $e$ is a sub-cpo of $\tB$. The composition of two ep-pairs yields another ep-pair: If $(e_1, p_1) : \tA \eppair \tB$ and $(e_2, p_2) : \tB \eppair \tC$, then $(e_2 \circ e_1, p_1 \circ p_2) : \tA \eppair \tC$. Ep-pairs can also be lifted over many type constructors, including strict sums, products, and continuous function space.

\begin{definition}
A continuous function $d :: \tA \to \tA$ is a \emph{deflation} if it is idempotent and below the identity function: $d \circ d = d \below \hsid_\tA$.
\end{definition}

Deflations and ep-pairs are closely related. Given an ep-pair $(e, p) : \tA \eppair \tB$, the composition $e \circ p$ is a deflation on $\tB$ whose image is isomorphic to $\tA$. Conversely, every deflation $d :: \tB \rightarrow \tB$ also gives rise to an ep-pair. Let the cpo $\tA$ be the image of $d$; also let $e$ be the inclusion map from $\tA$ to $\tB$, and let $p = d$. Then $(e, p)$ is an embedding-projection pair. So saying that there exists an ep-pair from $\tA$ to $\tB$ is equivalent to saying that there exists a deflation on $\tB$ whose image is isomorphic to $\tA$. Figure~\ref{fig:ep-pairs} shows the relationship between ep-pairs and deflations.

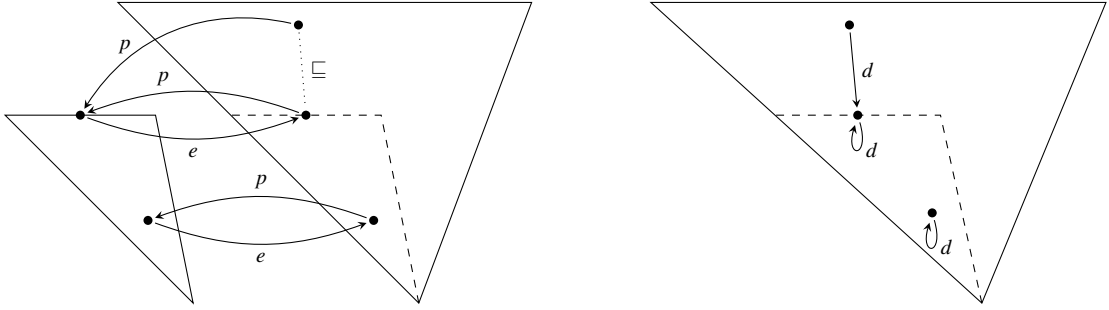
\begin{figure*}
\centering
\hfill
\begin{tikzpicture}
[>=stealth, point/.style={coordinate, circle, fill, inner sep=0, outer sep=0.4mm, minimum size=1.2mm}, xscale=1.0]
\draw [xshift=-3cm] (0,0) -- (-2.5, 2.5) -- (-0.5, 2.5) -- cycle;
\draw (0, 0) -- (-4, 4) -- (1.5, 4) -- cycle;
\draw [dashed] (0, 0) -- (-0.5, 2.5) -- (-2.5, 2.5);
\node (1a) [point] at (-0.6, 1.1) {};
\node (1b) [point, xshift=-3cm] at (-0.6, 1.1) {};
\node (2a) [point] at (-1.5, 2.5) {};
\node (2b) [point, xshift=-3cm] at (-1.5, 2.5) {};
\node (3a) [point] at (-1.6, 3.7) {};
\path (1a) edge [->, bend right=20, above] node {\small $p$} (1b);
\path (1b) edge [->, bend right=20, below] node {\small $e$} (1a);
\path (2a) edge [->, bend right=20, above, pos=0.65] node {\small $p$} (2b);
\path (2b) edge [->, bend right=20, below] node {\small $e$} (2a);
\path (3a) edge [->, bend right=35, above, near end] node {\small $p$} (2b);
\path (3a) edge [dotted, right] node {\small $\sqsubseteq$} (2a);
\end{tikzpicture}
\hfill
\begin{tikzpicture}
[>=stealth, point/.style={coordinate, circle, fill, inner sep=0, outer sep=0.4mm, minimum size=1.2mm}, xscale=1.1]
\draw (0, 0) -- (-4, 4) -- (1.5, 4) -- cycle;
\draw [dashed] (0, 0) -- (-0.5, 2.5) -- (-2.5, 2.5);
\node (1a) [point] at (-0.6, 1.2) {};
\node (2a) [point] at (-1.5, 2.5) {};
\node (3a) [point] at (-1.6, 3.7) {};
\path (1a) edge [->, loop below, right] node {\small $d$} (1a);
\path (2a) edge [->, loop below, right] node {\small $d$} (2a);
\path (3a) edge [->, right] node {\small $d$} (2a);
\end{tikzpicture}
\hfill{}
\caption{Embedding-projection pairs and deflations}
\label{fig:ep-pairs}
\end{figure*}

A deflation is a function, but it can also be viewed as a set: Just take the image of the function, or equivalently, its set of fixed points---for idempotent functions they are the same. The dashed outline in Fig.~\ref{fig:ep-pairs} shows the set defined by the deflation $d$. Every deflation on a cpo $\tA$ gives a set that is a sub-cpo, and contains $\bot$ if $\tA$ has a least element. Not all sub-cpos have a corresponding deflation, but if one exists then it is unique. The set-oriented and function-oriented views of deflations also give the same ordering: For any deflations $f$ and $g$, $f \sqsubseteq g$ if and only if $\mathrm{Im}(f) \subseteq \mathrm{Im}(g)$.

\subsection{Representable types}
\label{sec:universal-deflation}

We say that a type $\tA$ is \emph{representable} in domain $\U$ if there exists an ep-pair from $\tA$ to $\U$, or equivalently if there exists a deflation $d$ on $\U$ whose image $\mathrm{Im}(d)$ is isomorphic to $\tA$. We say that $\U$ is a \emph{universal domain} for some class of cpos if every cpo in the class is representable in $\U$. Isabelle/HOLCF provides such a universal domain type $\U$, which can represent any bifinite domain---this is a large class of cpos that includes (but is not limited to) all Haskell datatypes \cite{Huffman2009}.

HOLCF defines an axiomatic class of representable domains. The class fixes operations $\hsemb$ and $\hsproj$, and assumes that they form an ep-pair into the universal domain.
\begin{align*}
  & \kwd{class}\:\hsc{Rep}\:\tA\:\kwd{where} \\[-\jot]
  & \hspace{8pt}
  \begin{alignedat}{2}
    & \hsemb &&:: \tA \to \U \\[-\jot]
    & \hsproj &&:: \U \to \tA
  \end{alignedat} \\
  & \hspace{8pt}
  \begin{alignedat}{2}
    & \hsproj \circ \hsemb = \hsid_\tA \\[-\jot]
    & \hsemb \circ \hsproj \below \hsid_\U
  \end{alignedat}
\end{align*}

The universal domain type itself is trivially representable, using identity functions. For other base types like $\hsone$, the HOLCF universal domain library provides appropriate ep-pairs (Fig.~\ref{fig:udom-ep-pairs}).
\begin{align*}
&
\begin{aligned}
  & \kwd{instance}\:\hsRep\:\U\:\kwd{where} \\[-\jot]
  & \hspace{8pt}
  \begin{alignedat}{2}
    & \hsemb &&= \hsid_\U \\[-\jot]
    & \hsproj &&= \hsid_\U
  \end{alignedat}
\end{aligned}
&&
\begin{aligned}
  & \kwd{instance}\:\hsRep\:\hsone\:\kwd{where} \\[-\jot]
  & \hspace{8pt}
  \begin{alignedat}{2}
    & \hsemb &&= \embedding_\hsone \\[-\jot]
    & \hsproj &&= \projection_\hsone
  \end{alignedat}
\end{aligned}
\end{align*}

HOLCF defines the domain $\D$ of deflations over the universal domain as a subtype of $\U\to\hair\U$. (In the Isabelle formalization, explicit conversions between types $\D$ and $\hair\U\to\hair\U$ are always required, but we will keep those implicit here.)
\begin{equation}
\kwd{typedef}\;\D = \{\,d::\U\to\hair\U \mid {d}\circ{d}={d}\below\hsid_\U\,\}
\end{equation}

\begin{definition}[Representation of a type]
Given any representable type $\tA$, we can construct its \emph{representation} (a deflation of type $\D$) by composing \hsc{emb} and \hsc{proj}. We denote the mapping from types to deflations as follows:
\begin{equation*}
\REP{\tA} \defeq \hsemb_\tA \circ \hsproj_\tA
\end{equation*}
\end{definition}

Note that the representation of the universal domain $\REP{\hair\U}$ is therefore the identity deflation, which is maximal among all deflations. Thus we have $\REP{\tA} \below \REP{\hair\U}$ for all representable types $\tA$.

\begin{figure}
\begin{align*}
& \embedding_\to :: (\U\to\hair\U)\to\hair\U &
& \projection_\to :: \U\to(\U\to\hair\U) \\
& \embedding_\otimes :: (\U\otimes\hair\U)\to\hair\U &
& \projection_\otimes :: \U\to(\U\otimes\hair\U) \\
& \embedding_\oplus :: (\U\oplus\hair\U)\to\hair\U &
& \projection_\oplus :: \U\to(\U\oplus\hair\U) \\
& \embedding_\bot :: \U_\bot\to\hair\U &
& \projection_\bot :: \U\to\hair\U_\bot \\
& \embedding_\hsone :: \hsone\to\hair\U &
& \projection_\hsone :: \U\to\hsone
\end{align*}
\caption{Embedding-projection pairs provided by the universal domain library in HOLCF}
\label{fig:udom-ep-pairs}
\end{figure}

\subsection{Representable type constructors}
\label{sec:representable-tycons}

While types can be represented by deflations, type \emph{constructors} (which are like functions from types to types) can be represented as functions from deflations to deflations. We say that a type constructor $F :: *\to*$ is representable in $\U$ if there exists a continuous function $F_\D :: \D \to \D$ such that $\REP{F(\tA)} = F_\D\,\REP{\tA}$. Such deflation combinators can be used to build deflations for recursive datatypes \cite[\S7]{Gunter1990}. This is precisely the technique used by recent versions of the Domain package \cite[\S6.6]{holcf11}; we will also use the same technique for creating new type constructors in Sec.~\ref{sec:instantiation}.

We have a recipe for setting up a primitive HOLCF type as a representable type constructor: We just need an ep-pair to the universal domain and a map function. (The HOLCF universal domain library provides a selection of suitable ep-pairs; see Fig.~\ref{fig:udom-ep-pairs}.) We now demonstrate the recipe using the strict product type.
\begin{align*}
  & \mapProd :: (\tA \to \tA'\!,\:\tB \to \tB') \to \tA \otimes \tB \to \tA'\! \otimes \tB' \\[-\jot]
  & \mapProd\,(f, g)\,(\hsc{SPair}\:x\:y) = \hsc{SPair}\,(f\:x)\,(g\:y)
  \\[\jot]
  & (\otimes_\D) :: \D \to \D \to \D \\[-\jot]
  & a \otimes_\D b = \embedding_\otimes \circ \mapProd\,(a, b) \circ \projection_\otimes
  \\[\jot]
  & \kwd{instance}\:(\hsRep\:\tA, \hsRep\:\tB) \To \hsRep\,(\tA \otimes \tB)\:\kwd{where}
  \\[-\jot]
  & \hspace{8pt}
  \begin{alignedat}{2}
    & \hsemb &&= \embedding_\otimes \circ \mapProd\,(\hsemb_\tA, \hsemb_\tB)
    \\[-\jot]
    & \hsproj &&= \mapProd\,(\hsproj_\tA, \hsproj_\tB) \circ \projection_\otimes
  \end{alignedat}
\end{align*}
The reader can verify that $(\otimes_\D)$ does in fact preserve deflations, that $\hsemb$ and $\hsproj$ do form an ep-pair for type $\tA\otimes\tB$, and that $(\otimes_\D)$ actually does represent the strict product type constructor: $\REP{\tA\otimes\tB} = \REP{\tA} \otimes_\D \REP{\hair\tB}$.

Most other HOLCF type constructors work exactly like the strict product. However, the continuous function space is special because it is contravariant in its first argument.
\begin{align*}
  & \mapFun :: (\tA'\!\to \tA,\:\tB \to \tB') \to (\tA \to \tB) \to (\tA'\! \to \tB') \\[-\jot]
  & \mapFun\,(f, g)\:h = g \circ h \circ f
  \\[\jot]
  & (\to_\D) :: \D \to \D \to \D \\[-\jot]
  & a \to_\D b = \embedding_\to \circ \mapFun\,(a, b) \circ \projection_\to
  \\[\jot]
  & \kwd{instance}\:(\hsRep\:\tA, \hsRep\:\tB) \To \hsRep\,(\tA \to \tB)\:\kwd{where}
  \\[-\jot]
  & \hspace{8pt}
  \begin{alignedat}{2}
    & \hsemb &&= \embedding_\to \circ \mapFun\,(\hsproj_\tA, \hsemb_\tB)
    \\[-\jot]
    & \hsproj &&= \mapFun\,(\hsemb_\tA, \hsproj_\tB) \circ \projection_\to
  \end{alignedat}
\end{align*}
Due to contravariance, the first argument to $\mapFun$ has type $\tA'\!\to\tA$ instead of $\tA\to\tA'$. Also note that in the $\hsRep$ instance, $\hsemb$ calls $\hsprj$ and vice versa. Otherwise everything works similarly to the other types.

\subsection{Coercion}
\label{sec:coercion}

We can write a function to coerce between any two representable types: First embed into the universal domain $\U$, and then project out to a different type.
\begin{align*}
  & \hscoerce :: (\hsRep\:\tA, \hsRep\:\tB) \To \tA \to \tB
  \\[-\jot]
  & \hscoerce_{\tA,\,\tB} = \hsproj_\tB \circ \hsemb_\tA
\end{align*}

Our primary use for coercion will be to relate different type instances of polymorphic functions. In the remainder of the paper, we will often need to prove properties about coerced values; to facilitate this, we assemble a collection of simplification rules. First of all, $\hscoerce$ may reduce to $\hsemb$, $\hsprj$, or $\hsid$, depending on the type:
\begin{align}
\hscoerce_{\tA,\tA} &= \hsid_\tA \\
\hscoerce_{\tA,\U} &= \hsemb_\tA \\
\hscoerce_{\U,\tA} &= \hsproj_\tA
\end{align}

Other properties about $\hscoerce$ depend on the relative ``sizes'' of the source and target types. A coercion from a smaller to a larger type is injective (an embedding, in fact). Coercing twice in a row is the same as coercing once, as long as the intermediate type is larger than one of the source or target types.
\begin{gather}
\inferrule
  {\REP{\tA}\below\REP{\hair\tB} \vee \REP{\tC} \below \REP{\hair\tB}}
  {\hscoerce_{\tB,\tC} \circ \hscoerce_{\tA,\,\tB} = \hscoerce_{\tA,\tC}}
\end{gather}

Coercing between similar datatypes is the same as mapping $\hscoerce$ over the elements. (As an exercise, the reader may wish to verify Eq.~\eqref{eq:coerce-sprod} by expanding the definitions given earlier in Sec.~\ref{sec:representable-tycons}.) Using these rules, it is easy to verify that $\hscoerce$ commutes with each data constructor.
\begin{align}
\label{eq:coerce-sprod}
\hscoerce_{\tA\otimes\tB,\tC\otimes\tD} &= \mapProd(\hscoerce_{\tA,\tC}, \hscoerce_{\tB,\tD}) \\
\hscoerce_{\tA\oplus\tB,\tC\oplus\tD} &= \mapSum(\hscoerce_{\tA,\tC}, \hscoerce_{\tB,\tD}) \\
\hscoerce_{\tA_\bot,\,\tB_\bot} &= \mapLift(\hscoerce_{\tA,\,\tB})
\end{align}

A similar rule holds for coercions between two function types. The expanded form in Eq.~\eqref{eq:coerce-cfun} will be particularly useful for simplifying coercions in later proofs.
\begin{align}
\hscoerce_{(\tA\to\tB),(\tC\to\tD)} &= \mapFun(\hscoerce_{\tC,\tA}, \hscoerce_{\tB,\tD}) \\
\label{eq:coerce-cfun}
\hscoerce_{(\tA\to\tB),(\tC\to\tD)}\,f &= \hscoerce_{\tB,\tD} \circ f \circ \hscoerce_{\tC,\tA}
\end{align}

A note about the ubiquity of the \hsc{Rep} class: For the remainder of this paper, we will assume that all types $\tA, \tB, \tC, \dots$ are in class \hsc{Rep}, without writing \hsc{Rep} class constraints explicitly. The reader may treat \hsc{emb}, \hsc{proj}, and \hsc{coerce} as if they were completely polymorphic. (HOLCF achieves a similar effect using the ``default sort'' mechanism, assigning all type variables to class \hsc{Rep} unless annotated otherwise.)

\section{Type constructor classes in the Tycon library}
\label{sec:constructor-classes}


\subsection{Class Tycon and type application}
\label{sec:tycon}

In the Haskell type expression $\tT\:\tA$, the two type variables have different kinds: Say $\tA$ is an ordinary type of kind $*$; then $\tT$ may be a type constructor of kind $*\to*$. Isabelle's type system was not designed to be this expressive: All type variables in Isabelle represent ordinary types (corresponding to Haskell kind $*$).

Our solution to this limitation (originally introduced in \cite{HMW2005}) is to define a binary Isabelle type constructor $(-\cdot-)$ that models Haskell type application. The right argument must be in the Isabelle class $\hsRep$, which models Haskell kind $*$. The left argument must be in a new class \hsc{Tycon}, which models Haskell kind $*\to*$.

Class \hsc{Tycon} is defined as follows. It has no axioms, but fixes a single constant which is a deflation constructor.\footnote{In the Isabelle formalization, we express the dependence of $\TC{\tT}$ on type $\tT$ by adding a dummy function argument whose type is a phantom type mentioning $\tT$.}
\begin{equation*}
\kwd{class}\:\hsc{Tycon}\:\tT\:\kwd{where}\:\TC{\tT} :: \D \to \D
\end{equation*}

Now we want to define type $\tT\cdot\tA$ so that $\REP{\tT\cdot\tA} = \TC{\tT}\REP{\tA}$. We therefore define $\tT\cdot\tA$ as a subtype of $\U$, consisting of the image (or equivalently, the set of fixed-points) of the deflation $\TC{\tT}\REP{\tA}$.
\begin{align*}
  & \kwd{typedef}\:(\hsc{Tycon}\:\tT, \hsc{Rep}\:\tA) \To \tT\cdot\tA
  \\[-\jot]
  & \hspace{8pt}
  = \{\,u::\U \mid \TC{\tT}\REP{\tA}\,{u}={u}\,\}
  \\[\jot]
  & \kwd{instance}\:(\hsc{Tycon}\:\tT, \hsc{Rep}\:\tA) \To \hsc{Rep}\,(\tT\cdot\tA)\:
  \kwd{where}
  \\[-\jot]
  & \hspace{8pt}
  \begin{alignedat}{2}
    & \hsemb\:x = x \\[-\jot]
    & \hsprj\:u = \TC{\tT}\REP{\tA}\,u
  \end{alignedat}
\end{align*}
Note that the definitions of $\hsemb$ and $\hsproj$ contain implicit coercions between $\U$ and $\tT\cdot\tA$. The desired representation property then follows directly from the definitions of $\hsemb$ and $\hsproj$:
\begin{equation}
\REP{\tT\cdot\tA} = \TC{\tT}\REP{\tA}
\end{equation}

It is worth pointing out that while the construction refers to the deflation combinator $\TC{\tT}$, actual values of type $\tT$ are never used anywhere. This is consistent with Haskell, where there are no values inhabiting higher-kinded types.

\subsection{Class Functor}

The Haskell \hsc{Functor} class is for types that can be mapped over.
%
%
\begin{align*}
& \kwd{class}\:\hsc{Functor}\:\tT\:\kwd{where} \\[-\jot]
& \hspace{8pt} \fmap :: (\tA \to \tB) \to (\tT\:\tA \to \tT\:\tB)
\end{align*}
Each Haskell \hsc{Functor} instance should satisfy the identity and composition laws:
\begin{align}
  \fmap\:\hsid &= \hsid \\
  \fmap\:(f \circ g) &= \fmap\:f \circ \fmap\:g
\end{align}

How close are we to being able to formalize this in Isabelle? Using the type application machinery from Sec.~\ref{sec:tycon}, we can at least express the class constraint $\hsc{Functor}\:\tT$ and the result type of $\fmap$. However, there are still some problems. First, let us examine the type of $\fmap$ more closely. The type constructor variable $\tT$ is fixed, but types $\tA$ and $\tB$ are actually universally quantified:
\begin{equation*}
\fmap^\tT :: \forall\,\tA\:\tB.\,(\tA \to \tB) \to (\tT\cdot\tA \to \tT\cdot\tB)
\end{equation*}
The problem is that Isabelle's class system does not allow polymorphic class constants. Isabelle's type system does not support first-class polymorphism, and the type of class functions are only allowed to contain one free type variable, i.e., the one mentioned in the class constraint \cite{Wenzel1997}.

The solution is to move the polymorphism out of the class declaration. We replace the polymorphic $\fmap^\tT$ with a single, monomorphic constant representing $\fmap^\tT_{\U,\U}$, the ``largest'' type instance of $\fmap^\tT$. (We use the underlined name $\fmapU^\tT$ to refer to this monomorphic version.) We then define the polymorphic $\fmap^\tT$ by coercion from $\fmapU^\tT$.
\begin{align*}
  & \kwd{class}\:(\hsc{Tycon}\:\tT) \To \hsc{Functor}\:\tT\:\kwd{where}
  \\[-\jot]
  & \hspace{8pt} \fmapU^\tT :: (\U \to \U) \to (\tT\cdot\U \to \tT\cdot\U)
  \\[\jot]
  & \fmap :: (\hsc{Functor}\:\tT) \To (\tA \to \tB) \to \tT\cdot\tA \to \tT\cdot\tB
  \\[-\jot]
  & \fmap^\tT_{\tA,\,\tB} = \hscoerce\:\fmapU^\tT
\end{align*}

In Haskell, polymorphically typed functions like $\fmap^\tT$ are always \emph{parametrically} polymorphic. That is, parametricity (a meta-property of the type system) ensures that all of the different type instances of $\fmap^\tT_{\tA,\,\tB}$ behave uniformly \cite{Wadler1989}. Isabelle's type system does not provide any automatic parametricity guarantees, but by defining all type instances of $\fmap^\tT$ by coercion from a single constant, we ensure a similar kind of uniformity across type instances in the our library.

The formalization of class \hsc{Functor} is yet incomplete: We have the constant, but not the functor laws. We need to find a set of class axioms about $\fmapU^\tT$ that will let us derive the polymorphic functor laws about $\fmap^\tT$.

As a first try, we might just write down the functor laws with all the types specialized to type $\U$:
\begin{align}
\label{eq:fmapu-id}
\fmapU^\tT\:\hsid_\U &= \hsid_{\tT\cdot\U} \\
\label{eq:fmapu-compose}
\fmapU^\tT (f \circ g) &= \fmapU^\tT f \circ \fmapU^\tT g
\end{align}
However, we shall treat these as tentative until we see whether they are sufficient to derive the polymorphic functor laws.

\begin{theorem}[Identity]
\label{thm:functor-identity}
For any $\tT$ in class \hsc{Functor} and representable type $\tA$, the functor identity law holds:
\begin{equation*}
\fmap^\tT_{\tA,\tA}\:\hsid_\tA = \hsid_{\tT\cdot\tA}
\end{equation*}
\end{theorem}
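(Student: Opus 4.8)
The plan is to unfold the definition $\fmap^\tT_{\tA,\tA} = \hscoerce\,\fmapU^\tT$ and push the coercions inward until the goal is a statement purely about the monomorphic constant $\fmapU^\tT$, at which point I hope to invoke the tentative axioms \eqref{eq:fmapu-id} and \eqref{eq:fmapu-compose}. The coercion in question is between the function types $(\U\to\U)\to(\tT\cdot\U\to\tT\cdot\U)$ and $(\tA\to\tA)\to(\tT\cdot\tA\to\tT\cdot\tA)$, so I would apply the function-space rule \eqref{eq:coerce-cfun} to the outermost arrow and evaluate at $\hsid_\tA$, obtaining
\begin{equation*}
\fmap^\tT_{\tA,\tA}\,\hsid_\tA = \hscoerce_{(\tT\cdot\U\to\tT\cdot\U),(\tT\cdot\tA\to\tT\cdot\tA)}\bigl(\fmapU^\tT\,(\hscoerce_{(\tA\to\tA),(\U\to\U)}\,\hsid_\tA)\bigr).
\end{equation*}

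First I would simplify the inner argument: applying \eqref{eq:coerce-cfun} again and then the base reductions $\hscoerce_{\tA,\U} = \hsemb_\tA$ and $\hscoerce_{\U,\tA} = \hsproj_\tA$ collapses it to $\hsemb_\tA\circ\hsproj_\tA = \REP{\tA}$. Using \eqref{eq:coerce-cfun} once more to turn the outer coercion into a sandwich, the goal reduces to
\begin{equation*}
\hscoerce_{\tT\cdot\U,\tT\cdot\tA}\circ(\fmapU^\tT\,\REP{\tA})\circ\hscoerce_{\tT\cdot\tA,\tT\cdot\U} = \hsid_{\tT\cdot\tA}.
\end{equation*}
Since $\REP{\tT\cdot\tA}\below\REP{\tT\cdot\U}$, the pair $(\hscoerce_{\tT\cdot\tA,\tT\cdot\U},\hscoerce_{\tT\cdot\U,\tT\cdot\tA})$ is an ep-pair $\tT\cdot\tA \eppair \tT\cdot\U$ whose composite projection-after-embedding is $\hsid_{\tT\cdot\tA}$, and the sandwich collapses to the identity exactly when $\fmapU^\tT\,\REP{\tA}$ is the deflation on $\tT\cdot\U$ whose image is $\tT\cdot\tA$, i.e. when it coincides with $\TC{\tT}\REP{\tA} = \REP{\tT\cdot\tA}$.

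This is where I expect the proof to stall, and I believe it is the crux of the section. The only facts the tentative axioms yield about $\fmapU^\tT\,\REP{\tA}$ are weak: the composition law \eqref{eq:fmapu-compose} applied to the idempotent $\REP{\tA}$ shows $\fmapU^\tT\,\REP{\tA}$ is idempotent, and continuity together with $\REP{\tA}\below\hsid_\U$ and the identity law \eqref{eq:fmapu-id} shows $\fmapU^\tT\,\REP{\tA}\below\hsid_{\tT\cdot\U}$. Hence $\fmapU^\tT\,\REP{\tA}$ is \emph{some} deflation on $\tT\cdot\U$, but nothing pins it to the \emph{particular} deflation $\REP{\tT\cdot\tA}$ we need: \eqref{eq:fmapu-id} constrains $\fmapU^\tT$ only at $\hsid_\U$, whereas the proof requires its value at the proper deflation $\REP{\tA}$.

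The resolution, therefore, is not to push harder on the tentative axioms but to strengthen them. I would replace the specialized identity law \eqref{eq:fmapu-id} with a naturality/coherence axiom forcing $\fmapU^\tT$ to agree with the deflation combinator $\TC{\tT}$ on deflations --- concretely, that $\fmapU^\tT\,\REP{\tA}$, re-embedded into $\U$ as $\hsemb_{\tT\cdot\U}\circ(\fmapU^\tT\,\REP{\tA})\circ\hsproj_{\tT\cdot\U}$, equals $\TC{\tT}\REP{\tA} = \REP{\tT\cdot\tA}$. Note that \eqref{eq:fmapu-id} is just the special case $\tA = \U$, where $\REP{\U} = \hsid_\U$ and $\TC{\tT}\hsid_\U = \REP{\tT\cdot\U}$, so this is a genuine strengthening rather than the replacement of unrelated content. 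With such an axiom the sandwich collapses immediately through the ep-pair identity, completing the proof, and the composition law \eqref{eq:fmapu-compose} can be retained for the companion composition theorem.
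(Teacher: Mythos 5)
Your proposal matches the paper's own proof essentially step for step: the same unfolding and coercion simplification reducing the goal to the sandwich $\hscoerce_{\tT\cdot\U,\tT\cdot\tA}\circ\fmapU^\tT\REP{\tA}\circ\hscoerce_{\tT\cdot\tA,\tT\cdot\U}$, the same diagnosis that the tentative axiom \eqref{eq:fmapu-id} constrains $\fmapU^\tT$ only at $\hsid_\U$ rather than at the deflation $\REP{\tA}$, and the same resolution of replacing it with the axiom that $\fmapU^\tT$ agrees with the deflation combinator, $\fmapU^\tT\,d \isodefl \TC{\tT}\,d$, which subsumes \eqref{eq:fmapu-id} as the case $d = \hsid_\U$. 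The paper then collapses the sandwich exactly as you indicate, via $\TC{\tT}\REP{\tA} = \REP{\tT\cdot\tA} = \hsemb_{\tT\cdot\tA}\circ\hsproj_{\tT\cdot\tA}$ and the ep-pair equation $\hsproj\circ\hsemb = \hsid$.
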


\begin{proof}
We start by unfolding the definition of $\fmap$ and rewriting with properties of $\hscoerce$.
\begin{align*}
& \fmap^\tT_{\tA,\tA}\:\hsid_\tA \\
&= (\hscoerce\:\fmapU^\tT)\:\hsid_\tA \\
&= \hscoerce_{(\tT\cdot\hair\U,\tT\cdot\tA)} \circ \fmapU^\tT (coerce\:\hsid_\tA) \circ \hscoerce_{(\tT\cdot\tA,\tT\cdot\hair\U)} \\
&= \hscoerce_{(\tT\cdot\hair\U,\tT\cdot\tA)} \circ \fmapU^\tT \REP{\tA} \circ \hscoerce_{(\tT\cdot\tA,\tT\cdot\hair\U)}
\end{align*}

\noindent
At this point we are stuck. A law about $\fmapU^\tT\:\hsid_\U$ does not help here, because coercing $\hsid_\tA$ to type $\U\to\U$ does not yield $\hsid_\U$; it gives $\REP{\tA}$ instead. What we really need is a rewrite rule for $\fmapU^\tT$ applied to an arbitrary deflation. The class axiom must assert that the map function $\fmapU^\tT$ ``agrees'' with the deflation combinator $\TC{\tT}$ in a certain sense.

\begin{definition}
We say that a function $f$ on a representable type $\tA$ \emph{agrees with} a deflation $d$ on the universal domain, if $f$ coerced to type $\U\to\U$ is equal to $d$ (regarded as a function).
\begin{equation*}
(f::\tA\to\tA) \isodefl (d::\D) \overset{\mathrm{def}}{\Longleftrightarrow} \hsc{emb}_\tA \circ f \circ \hsc{proj}_\tA = d
\end{equation*}
\end{definition}

\noindent
This agreement relation is already present in HOLCF: It is used internally by the Domain package for relating deflation combinators to map functions, for proving identity laws and deriving induction rules \cite{holcf11}. So it is fitting that it should appear in this situation, where we are again proving functor identity laws.

We replace Eq.~\eqref{eq:fmapu-id} with this generalized class axiom, shown here also in its unfolded form:
\begin{gather}
\fmapU^\tT\,(d::\D) \isodefl \TC{\tT}\,d \\
\hsemb_{\tT\cdot\U} \circ \fmapU^\tT\,(d::\D) \circ \hsprj_{\tT\cdot\U} = \TC{\tT}\,d
\end{gather}

\noindent
Now we can continue where we left off:
\begin{align*}
& \hscoerce_{(\tT\cdot\hair\U,\tT\cdot\tA)} \circ \fmapU^\tT \REP{\tA} \circ \hscoerce_{(\tT\cdot\tA,\tT\cdot\hair\U)} \\
& = \hsc{proj}_{\tT\cdot\tA} \circ \hsc{emb}_{\tT\cdot\hair\U} \circ \fmapU^\tT \REP{\tA} \circ \hsc{proj}_{\tT\cdot\hair\U} \circ \hsc{emb}_{\tT\cdot\tA} \\
&= \hsc{proj}_{\tT\cdot\tA} \circ \TC{\tT}\REP{\tA} \circ \hsc{emb}_{\tT\cdot\tA} \\
&= \hsc{proj}_{\tT\cdot\tA} \circ \REP{\tT\cdot\tA} \circ \hsc{emb}_{\tT\cdot\tA} \\
&= \hsc{proj}_{\tT\cdot\tA} \circ \hsc{emb}_{\tT\cdot\tA} \circ \hsc{proj}_{\tT\cdot\tA} \circ \hsc{emb}_{\tT\cdot\tA} \\
&= \hsid_{\tT\cdot\tA} \circ \hsid_{\tT\cdot\tA} \\
&= \hsid_{\tT\cdot\tA}\qedhere
\end{align*}
\end{proof}

\begin{theorem}[Composition]
\label{thm:functor-composition}
For any $\tT$ in class \hsc{Functor} and functions $f::\tB\to\tC$ and $g::\tA\to\tB$, the functor composition law holds:
\begin{equation*}
\fmap^\tT_{\tA,\tC}\,(f \circ g) = \fmap^\tT_{\tB,\tC}\:f \circ \fmap^\tT_{\tA,\,\tB}\:g
\end{equation*}
\end{theorem}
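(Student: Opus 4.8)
The plan is to mirror the proof of Theorem~\ref{thm:functor-identity}: unfold each occurrence of $\fmap^\tT = \hscoerce\:\fmapU^\tT$, push the coercions inward with the function-space rule~\eqref{eq:coerce-cfun}, and reduce the goal to equations about $\fmapU^\tT$ that follow from the class axioms. Writing $\bar h \defeq \hsemb \circ h \circ \hsproj$ for the coercion of $h$ up to type $\U\to\U$, the unfolding yields, for every $h :: \tA\to\tB$,
\begin{equation*}
\fmap^\tT_{\tA,\tB}\:h = \hscoerce_{\tT\cdot\U,\tT\cdot\tB} \circ \fmapU^\tT\,\bar h \circ \hscoerce_{\tT\cdot\tA,\tT\cdot\U}.
\end{equation*}

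First I would rewrite the left-hand side. The key algebraic fact is $\hsproj_\tB \circ \hsemb_\tB = \hsid_\tB$, which lets the coerced composite factor through $\tB$:
\begin{equation*}
\overline{f \circ g} = \hsemb_\tC \circ f \circ g \circ \hsproj_\tA = (\hsemb_\tC \circ f \circ \hsproj_\tB) \circ (\hsemb_\tB \circ g \circ \hsproj_\tA) = \bar f \circ \bar g.
\end{equation*}
Applying the composition axiom~\eqref{eq:fmapu-compose} then splits $\fmapU^\tT\,(\bar f \circ \bar g) = \fmapU^\tT\,\bar f \circ \fmapU^\tT\,\bar g$, so the left-hand side becomes $\hscoerce_{\tT\cdot\U,\tT\cdot\tC} \circ \fmapU^\tT\,\bar f \circ \fmapU^\tT\,\bar g \circ \hscoerce_{\tT\cdot\tA,\tT\cdot\U}$. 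Unfolding the right-hand side the same way gives the identical expression except that, where the output of $\fmap^\tT_{\tA,\tB}\:g$ feeds the input of $\fmap^\tT_{\tB,\tC}\:f$, an extra factor $\hscoerce_{\tT\cdot\tB,\tT\cdot\U} \circ \hscoerce_{\tT\cdot\U,\tT\cdot\tB}$ appears between the two $\fmapU^\tT$ terms.

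This intermediate coercion is the main obstacle, and it is subtler than anything in the identity proof. One is tempted to cancel it to $\hsid_{\tT\cdot\U}$ by the coercion-composition rule, but that rule's side condition fails here: collapsing it would require $\REP{\tT\cdot\U} \below \REP{\tT\cdot\tB}$, whereas in fact $\REP{\tT\cdot\tB} \below \REP{\tT\cdot\U}$ because $\tB$ is smaller than $\U$. Instead I would expand the two coercions into $\hsemb$ and $\hsproj$, recognise $\hsemb_{\tT\cdot\tB} \circ \hsproj_{\tT\cdot\tB} = \REP{\tT\cdot\tB} = \TC{\tT}\REP{\tB}$, and apply the $\isodefl$ axiom to rewrite this deflation; after cancelling $\hsproj_{\tT\cdot\U} \circ \hsemb_{\tT\cdot\U} = \hsid_{\tT\cdot\U}$, the whole intermediate factor collapses to $\fmapU^\tT\,\REP{\tB}$.

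The proof then finishes with one more use of the composition axiom. Since $\bar g$ factors through $\tB$ we have $\REP{\tB} \circ \bar g = \bar g$ (again by $\hsproj_\tB \circ \hsemb_\tB = \hsid_\tB$), so
\begin{equation*}
\fmapU^\tT\,\REP{\tB} \circ \fmapU^\tT\,\bar g = \fmapU^\tT\,(\REP{\tB} \circ \bar g) = \fmapU^\tT\,\bar g,
\end{equation*}
which shows the intermediate factor is invisible in context and makes the two sides coincide. The noteworthy point is that the composition law does \emph{not} follow from the composition axiom~\eqref{eq:fmapu-compose} alone: the $\isodefl$ axiom is indispensable for taming the coercion forced by the size mismatch between $\tT\cdot\tB$ and $\tT\cdot\U$. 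The only bookkeeping I would be careful about is the implicit subtype conversions between $\U$ and $\tT\cdot\U$ when invoking the $\isodefl$ axiom, exactly as in the identity proof.
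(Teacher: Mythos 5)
Your proof is correct and takes essentially the same route as the paper's: the same unfolding via Eq.~\eqref{eq:coerce-cfun}, the same key step of collapsing the stranded coercion pair $\hscoerce_{\tT\cdot\tB,\tT\cdot\U} \circ \hscoerce_{\tT\cdot\U,\tT\cdot\tB}$ to $\fmapU^\tT\REP{\hair\tB}$ by expanding into $\hsemb$/$\hsproj$ and invoking the $\isodefl$ axiom (the paper's Eq.~\eqref{eq:coerce-fmap}), and the same finishing use of the composition axiom together with $\hsproj_\tB \circ \hsemb_\tB = \hsid_\tB$. The only difference is immaterial bookkeeping direction---the paper merges the right-hand side into a single $\fmapU^\tT$ application and compares arguments, while you split the left-hand side and absorb the middle factor---and your closing observation that the $\isodefl$ axiom is indispensable here is exactly the point the paper makes when motivating that axiom.
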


\begin{proof}
We rewrite both sides of the equation, trying to reduce it to a trivial equality. We start by unfolding the definition of $\fmap$.
\begin{align*}
\fmap^\tT_{\tA,\tC}\,(f \circ g) &= \fmap^\tT_{\tB,\tC}\:f \circ \fmap^\tT_{\tA,\,\tB}\:g \\
(\hscoerce\:\fmapU^\tT)(f \circ g) &= (\hscoerce\:\fmapU^\tT)\,f \circ (\hscoerce\:\fmapU^\tT)\,g
\end{align*}

\noindent
After rewriting using Eq.~\eqref{eq:coerce-cfun}, we have
\begin{multline*}
\hscoerce \circ \fmapU^\tT(\hscoerce\,(f \circ g)) \circ \hscoerce \\[-\jot]
= \hscoerce \circ \fmapU^\tT(\hscoerce\:f) \circ \hscoerce \\[-\jot]
\circ \hscoerce \circ \fmapU^\tT(\hscoerce\:g) \circ \hscoerce
\end{multline*}

In the middle of the right-hand side we have two adjacent coercions, where we go from type $\tT\cdot\U$ to $\tT\cdot\tB$ and back. Since the intermediate type $\tT\cdot\tB$ is smaller, they do not cancel completely. It turns out that they reduce to an application of $\fmapU^\tT$:
\begin{align}
\notag & \hscoerce_{\tT\cdot\tB,\tT\cdot\U} \circ \hscoerce_{\tT\cdot\U,\tT\cdot\tB} \\
\notag & = \hsprj_{\tT\cdot\U} \circ \hsemb_{\tT\cdot\tB} \circ \hsprj_{\tT\cdot\tB} \circ \hsemb_{\tT\cdot\U} \\
\notag & = \hsprj_{\tT\cdot\U} \circ \REP{\tT\cdot\tB} \circ \hsemb_{\tT\cdot\U} \\
\notag & = \hsprj_{\tT\cdot\U} \circ \TC{\tT}\REP{\hair\tB} \circ \hsemb_{\tT\cdot\U} \\
\notag & = \hsprj_{\tT\cdot\U} \circ \hsemb_{\tT\cdot_\U} \circ \fmapU^\tT\REP{\hair\tB} \circ \hsprj_{\tT\cdot\U} \circ \hsemb_{\tT\cdot\U} \\
\notag & = \hsid_{\tT\cdot\U} \circ \fmapU^\tT\REP{\hair\tB} \circ \hsid_{\tT\cdot\U} \\
\label{eq:coerce-fmap} & = \fmapU^\tT\REP{\hair\tB}
\end{align}

Rewriting the right-hand side with Eq.~\eqref{eq:coerce-fmap} yields three occurrences of $\fmapU^\tT$ composed together. Using the composition rule \eqref{eq:fmapu-compose} to collapse these, we get
\begin{multline*}
\hscoerce \circ \fmapU^\tT(\hscoerce\,(f \circ g)) \circ \hscoerce \\[-\jot]
= \hscoerce \circ \fmapU^\tT(\hscoerce\:f \circ \REP{\hair\tB} \circ \hscoerce\:g) \circ \hscoerce
\end{multline*}

Finally, it only remains to show that the arguments to $\fmapU^\tT$ on each side are equal. We work from right to left.
\begin{align*}
& \hscoerce_{(\tB\to\tC),(\U\to\U)}\,f \circ \REP{\hair\tB} \circ \hscoerce_{(\tA\to\tB),(\U\to\U)}\,g \\
& = \hsemb_\tC \circ f \circ \hsprj_\tB \circ \REP{\hair\tB} \circ \hsemb_\tB \circ g \circ \hsprj_\tA \\
& = \hsemb_\tC \circ f \circ \hsprj_\tB \circ \hsemb_\tB \circ \hsprj_\tB \circ \hsemb_\tB \circ g \circ \hsprj_\tA \\
& = \hsemb_\tC \circ f \circ \hsid_\tB \circ \hsid_\tB \circ g \circ \hsprj_\tA \\
& = \hsemb_\tC \circ f \circ g \circ \hsprj_\tA \\
& = \hscoerce_{(\tA\to\tC),(\U\to\U)}\,(f \circ g) \qedhere
\end{align*}
\end{proof}

The final formulation of class \hsc{Functor}, complete with the generalized identity law, is shown in Fig.~\ref{fig:functor}.

We should note that while transfer proofs like Theorem~\ref{thm:functor-composition} may look lengthy on paper, they are actually highly automated in Isabelle: Most such proofs require only a single call to Isabelle's simplifier, as long as the appropriate extra rewrite rules like Eq.~\eqref{eq:coerce-fmap} are in place. This is important for usability of the library, because users will need to perform similar transfer proofs often---not just when defining new constructor classes, but also when instantiating them.

We present proofs here in a point-free style, with liberal use of the function composition operator $(\circ)$, because it makes the proofs easier to read. However, Isabelle is not so good at reasoning modulo the associativity of function composition. Automatic proofs by rewriting work better with nested function applications, e.g. $f(g(h(x)))$ rather than $f \circ g \circ h$. Therefore, the rewrite rules and other theorems in our library are actually formalized using fully applied functions instead of function composition.

\begin{figure}
\begin{align*}
  & \kwd{class}\:\hsc({Tycon}\:\tT) \To \hsc{Functor}\:\tT\:\kwd{where} \\[-\jot]
  & \hspace{8pt} \fmapU^\tT :: (\U \to \U) \to (\tT\cdot\U \to \tT\cdot\U) \\[\jot]
  & \hspace{8pt}
  \begin{alignedat}{2}
    & \fmapU^\tT\,(d::\D) &&\isodefl \TC{\tT}\,d \\[-\jot]
    & \fmapU^\tT (f \circ g) &&= \fmapU^\tT f \circ \fmapU^\tT g
  \end{alignedat}
  \\[\jot]
  & \fmap :: (\hsc{Functor}\:\tT) \To (\tA \to \tB) \to \tT\cdot\tA \to \tT\cdot\tB
  \\[-\jot]
  & \fmap^\tT_{\tA,\tB} = \hscoerce\:\fmapU^\tT
\end{align*}
\caption{Isabelle \hsc{Functor} class}
\label{fig:functor}
\end{figure}

\subsection{Generic theorems about functors}

Now that we have a functor class, we can prove further theorems about $\fmap$. Here is an example theorem, about its strictness. The proof uses only the functor laws and basic properties of domain theory; the result is applicable to any valid functor instance.

\begin{theorem}[Strict $\fmap$]
\label{thm:strict-fmap}
If $f::\tA\to\tB$ is a strict function, then $\fmap\:f$ is also strict: $f\:\bot = \bot \implies \fmap\:f\:\bot = \bot$.
\end{theorem}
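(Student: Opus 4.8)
The plan is to strip off the polymorphic wrapper and push the whole problem down to the monomorphic map $\fmapU^\tT$, where the two class axioms of Fig.~\ref{fig:functor} apply directly. First I would unfold $\fmap^\tT_{\tA,\tB} = \hscoerce\:\fmapU^\tT$ and rewrite with Eq.~\eqref{eq:coerce-cfun} (applied twice, to the operator and to the resulting function), obtaining
\[
\fmap^\tT_{\tA,\tB}\:f = \hscoerce_{\tT\cdot\U,\tT\cdot\tB} \circ \fmapU^\tT(\hscoerce_{(\tA\to\tB),(\U\to\U)}\:f) \circ \hscoerce_{\tT\cdot\tA,\tT\cdot\U}.
\]
Since $\hsemb$ and $\hsproj$ are strict for every ep-pair, each $\hscoerce$ is strict, and the coerced argument $g \defeq \hscoerce_{(\tA\to\tB),(\U\to\U)}\:f = \hsemb_\tB \circ f \circ \hsproj_\tA$ is itself strict whenever $f$ is, since it sends $\bot$ to $\hsemb_\tB(f(\bot)) = \hsemb_\tB(\bot) = \bot$. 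A composite of strict functions is strict, so the entire statement reduces to one fact: for a strict $g :: \U\to\U$, the function $\fmapU^\tT\:g$ is strict.

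To establish that core fact I would work with the least deflation $\bot_\D$, whose underlying function is the least element of $\U\to\U$. Strictness of $g$ gives $g \circ \bot_\D = \bot_\D$ as functions on $\U$, so the composition axiom yields $\fmapU^\tT\:g \circ \fmapU^\tT\:\bot_\D = \fmapU^\tT\:\bot_\D$. It then suffices to see that $\fmapU^\tT\:\bot_\D$ is strict, for evaluating this equation at $\bot$ immediately gives $\fmapU^\tT\:g\:\bot = \bot$.

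This is where the other class axiom enters. The agreement axiom $\fmapU^\tT\:(d::\D) \isodefl \TC{\tT}\:d$ specializes at $d = \bot_\D$ to $\hsemb_{\tT\cdot\U} \circ \fmapU^\tT\:\bot_\D \circ \hsproj_{\tT\cdot\U} = \TC{\tT}\:\bot_\D$. The right-hand side is a genuine deflation, and every deflation is strict, since $d \below \hsid$ forces $d\:\bot \below \bot$. A short chase---evaluating at $\hsemb_{\tT\cdot\U}(\bot) = \bot$ and using that $\hsemb$, $\hsproj$ are strict and $\hsemb$ is injective---then forces $\fmapU^\tT\:\bot_\D\:\bot = \bot$, closing the argument.

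The main obstacle, and the reason I would not try to argue purely from the polymorphic laws of Theorems~\ref{thm:functor-identity} and~\ref{thm:functor-composition}, is that those two laws are genuinely insufficient. Working only from $\fmap\:\hsid = \hsid$ and the composition law, the most one can extract is that $\fmap\:f\:\bot$ is a fixed point of the deflation $\fmap\:(\bot_{\tB\to\tB})$---and for functors such as lists this deflation has many fixed points besides $\bot$. Strictness therefore cannot come from naturality alone; it must come from the representation, i.e.\ from the agreement axiom tying $\fmapU^\tT$ to the deflation combinator $\TC{\tT}$. Recognizing that the load-bearing hypothesis is $\isodefl$ applied to the bottom deflation is the crux; once that is identified, every remaining step is a routine strictness calculation of the kind Isabelle's simplifier discharges automatically.
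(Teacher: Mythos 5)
Your proof is correct: the reduction via Eq.~\eqref{eq:coerce-cfun} to strictness of $\fmapU^\tT\:g$ for strict $g :: \U\to\U$, the use of the composition axiom with the constant-bottom deflation, and the extraction of strictness of $\fmapU^\tT\:\bot_\D$ from the agreement axiom (using strictness of deflations and strictness plus injectivity of $\hsemb$ and $\hsproj$) are all valid steps in the library. But the paper takes a completely different and much lighter route, and your closing argument for why that route is impossible is mistaken. The paper stays entirely at the polymorphic level: let $g :: \tB \to \tA$ be the constant bottom function; strictness of $f$ gives $f \circ g = \hsc{const}\:\bot \below \hsid_\tB$, whence
\begin{equation*}
\fmap\:f\:\bot \;\below\; \fmap\:f\:(\fmap\:g\:\bot) \;=\; \fmap\:(f \circ g)\:\bot \;\below\; \fmap\:\hsid\:\bot \;=\; \bot,
\end{equation*}
by monotonicity of $\fmap\:f$, the composition law, monotonicity of $\fmap$ itself, and the identity law. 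Your impossibility claim (``the two laws are genuinely insufficient\dots\ strictness cannot come from naturality alone; it must come from the representation'') analyzes only the equational content of the laws and forgets the order structure: in HOLCF every function, including $\fmap$ and $\fmap\:f$, is continuous and hence monotone, and monotonicity is exactly what turns the two polymorphic laws into the chain above. So your fixed-point observation is not the most one can extract, and the agreement axiom is not the load-bearing hypothesis here.

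The difference matters for more than elegance. The paper presents this theorem as a \emph{generic} theorem about functors, proved from the polymorphic functor laws and basic domain theory alone, so it applies to any instance of the class---indeed to anything satisfying those laws---without reference to the deflation model. Your proof is tied to the Tycon representation: it needs the class axiom $\fmapU^\tT\,d \isodefl \TC{\tT}\,d$ and the coercion infrastructure, so it establishes the result only for functors built through this library, and it would have to be redone if the representation changed. When a statement mentions only $\fmap$, $\bot$, and $\below$, it is worth first trying to prove it from the polymorphic laws and the order structure before descending to $\fmapU^\tT$.
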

\begin{proof}
Fix $f :: \tA\to\tB$, and assume $f\:\bot_\tA = \bot_\tB$. Let $g :: \tB\to\tA$ be the constant bottom function, $g\:x = \bot_\tA$. From the strict\-ness of $f$, it follows that $f \circ g = \hsc{const}\:\bot \below id_\tB$. We can now show the goal by antisymmetry and transitivity reasoning:
\begin{align*}
\fmap\:f\:\bot
&\below \fmap\:f\:(\fmap\:g\:\bot) & \justification{monotonicity with $\bot$} \\
&= \fmap\:(f \circ g)\:\bot & \justification{composition law} \\
&\below \fmap\:\hsid\:\bot & \justification{monotonicity, $f \circ g \below \hsid$} \\
&= \bot & \justification{identity law}
\end{align*}
Thus we have $\fmap\:f\:\bot \below \bot$, which implies $\fmap\:f\:\bot = \bot$.
\end{proof}

\subsection{Subclasses of Functor}
\label{sec:functorplus}

Users of the Tycon library can easily formalize additional constructor classes that are subclasses of \hsc{Functor}. The library already contains several examples, and they all follow the same general process.

A constructor class may fix some number of polymorphic constants, and assume a set of polymorphic class axioms. The formalized constructor class fixes a monomorphic version of each polymorphic function, with type variables instantiated to $\U$. Similarly, the formalized class assumes a monomorphic version of each class axiom. The polymorphic version of each functions is defined separately, using coercion. In general, we will also add a \emph{naturality} law for each polymorphic function, which is related to the parametricity property, or free theorem, derived from its type \cite{Wadler1989}. The naturality laws are necessary for transferring properties about the monomorphic constants to the polymorphic ones.

The \hsc{Functor} class is a special case: No extra naturality law was needed for $\fmap$, because the functor composition law \emph{is} the naturality law for $\fmap$. The \hsc{Monad} class is perhaps the primary motivation for this work, but the interactions and redundancies between its laws also make it a bit of a special case. The general principles are best illustrated with a more regular example. So here we present a class \hsc{FunctorPlus}, which fixes a binary append operation for combining functor values:
\begin{align*}
& \kwd{class}\:(\hsc{Functor}\:\tT) \To \hsc{FunctorPlus}\:\tT\:\kwd{where} \\[-\jot]
& \hspace{8pt} (\hsapp)::\tT\:\tA\to\tT\:\tA\to\tT\:\tA
\end{align*}
Each instance of \hsc{FunctorPlus} should also ensure that $(\hsapp)$ is associative:
\begin{equation}
\label{eq:append-assoc}
({x}\hsapp{y})\hsapp{z}={x}\hsapp({y}\hsapp{z})
\end{equation}

Any implementation of $(\hsapp)$ should also satisfy a naturality condition, which essentially states that it commutes with $\fmap$. The form of this law is derived from the polymorphic type of $(\hsapp)$; it holds in Haskell as a consequence of parametricity.
\begin{equation}
\label{eq:fmap-append}
\fmap\:f\:(x \hsapp y) = (\fmap\:f\:x) \hsapp (\fmap\:f\:y)
\end{equation}

We formalize class \hsc{FunctorPlus} in Isabelle according to the general pattern outlined above; the code is shown in Fig.~\ref{fig:functorplus}.

\begin{figure}
\begin{align*}
  & \kwd{class}\:(\hsc{Functor}\:\tT) \To \hsc{FunctorPlus}\:\tT\:\kwd{where}
  \\[-\jot]
  & \hspace{8pt} (\hsappU^\tT) :: \tT\cdot\U \to \tT\cdot\U \to \tT\cdot\U \\
  & \hspace{8pt}
  \begin{alignedat}{2}
    & \fmapU^\tT\,f\:(x \hsappU^\tT y) &&= (\fmapU^\tT\,f\:x) \hsappU^\tT (\fmapU^\tT\,f\:y)
    \\[-\jot]
    & (x \hsappU^\tT y) \hsappU^\tT z &&= x \hsappU^\tT (y \hsappU^\tT z)
  \end{alignedat}
  \\[\medskipamount]
  & (\hsapp) :: (\hsc{FunctorPlus}\:\tT) \To \tT\cdot\tA \to \tT\cdot\tA \to \tT\cdot\tA
  \\[-\jot]
  & (\hsapp^\tT_\tA) = \hscoerce\:(\hsappU^\tT)
\end{align*}
\caption{Isabelle \hsc{FunctorPlus} class}
\label{fig:functorplus}
\end{figure}

The need for the naturality law becomes apparent when transferring laws to the polymorphic version of $(\hsapp)$. When we transfer the associativity law, we get a situation similar to what we had with the proof of Theorem~\ref{thm:functor-composition}: Between the two occurrences of $(\hsapp)$, we get a pair of coercions from $\tT\cdot\U$ to $\tT\cdot\tA$ and back; these reduce to $\fmapU^\tT\,\REP{\tA}$. The naturality law lets us push the $\fmapU^\tT$ into the arguments of the inner append, bringing the two appends together so that the monomorphic associativity rule can be applied. In the end, we are able to prove the polymorphic version of the associativity law with one call to the simplifier. Similarly, we can also derive the polymorphic version of the naturality law in one step.

\subsection{Class Monad}
\label{sec:monad}

The definition of the \hsc{Monad} class should be familiar to every Haskell programmer.
\begin{align*}
  &\kwd{class}\:\hsc{Monad}\:\tT\:\kwd{where} \\[-\jot]
  &\hspace{8pt}
  \begin{alignedat}{2}
    & \hsc{return}&&::\tA\to\tT\:\tA \\[-\jot]
    & (\hsbind)&&::\tT\:\tA\to(\tA\to\tT\:\tB)\to\tT\:\tB
  \end{alignedat}
\end{align*}

\noindent
The standard monad laws are left unit, right unit, and associativity.
\begin{align}
\label{eq:left-unit}
\hsc{return}\:{a}\hsbind{k} &= {k}\:{a} \\
\label{eq:right-unit}
{m}\hsbind\hsc{return} &= {m} \\
\label{eq:bind-bind}
({m}\hsbind{h})\hsbind{k}&={m}\hsbind(\lambda{x}.\:{h}\:{x}\hsbind{k})
\end{align}

To translate this Haskell class definition into Isabelle, we can follow the standard process established in Sec.~\ref{sec:functorplus}: Replace the polymorphic operations with monomorphic ones, where each type variable is instantiated to $\U$; specialize the types in the class axioms to $\U$; and add naturality laws for each of the constants.

Below are the naturality laws for the monad operations, derived from their type signatures. Note that $(\hsbind)$ has two naturality laws, because its type has two polymorphic variables.
\begin{align}
\label{eq:fmap-return}
\fmap\:f\,(\hsc{return}\:a) &= \hsc{return}\,(f\:a) \\
\label{eq:bind-fmap}
(\fmap\:f\:m) \hsbind k &= m \hsbind (k \circ f) \\
\label{eq:fmap-bind}
\fmap\:f\,(m \hsbind k) &= m \hsbind (\fmap\:f \circ k)
\end{align}

Three monad laws plus three naturality laws would make six class axioms in total. However, it is possible to reduce this number. Using Eqs. \eqref{eq:right-unit} and \eqref{eq:bind-fmap}, we can derive a simple definition of $\fmap$ in terms of $(\hsbind)$ and \hsc{return}:
\begin{equation}
\label{eq:monad-fmap}
\fmap\:f\:m = m \hsbind (\hsc{return} \circ f)
\end{equation}
This definition of $\fmap$ is often referred to as a fourth monad law; it is expected to hold for any Haskell type that is an instance of both the \hsc{Functor} and \hsc{Monad} classes.

It is simple to verify that from Eqs. \eqref{eq:left-unit}, \eqref{eq:bind-bind}, and \eqref{eq:monad-fmap}, we can derive all of the other monad and naturality laws. Thus we can use these three to formalize our \hsc{Monad} class (see Fig.~\ref{fig:class-monad}).
\begin{figure}
\begin{align*}
  & \kwd{class}\:(\hsc{Functor}\:\tT)\To\hsc{Monad}\:\tT\:\kwd{where}
  \\[-\jot]
  & \hspace{8pt}
  \begin{alignedat}{2}
    & \univ{\hsc{return}}^\tT&&::\U\to\tT\cdot\U \\[-\jot]
    & (\hsbindU^\tT)&&::\tT\cdot\U\to(\U\to\tT\cdot\U)\to\tT\cdot\U
  \end{alignedat} \\
  & \hspace{8pt}
  \begin{alignedat}{3}
    & \univ{\hsc{return}}^\tT\:{u}\hsbindU^\tT{k} &&= {k}\:{u} \\[-\jot]
    & ({m} \hsbindU^\tT {h})\hsbind{k}
    &&= {m} \hsbindU^\tT (\lambda{x}.\:{h}\:{x} \hsbindU^\tT {k}) \\[-\jot]
    & \fmapU^\tT\,f\:m &&= m \hsbindU^\tT (\univ{\hsc{return}}^\tT \circ f)
  \end{alignedat}
  \\[\medskipamount]
  & \hsc{return} :: (\hsc{Monad}\:\tT) \To \tA\to\tT\cdot\tA \\[-\jot]
  & \hsc{return}^\tT_\tA = \hscoerce\:\univ{\hsc{return}}^\tT \\
  & (\hsbind) :: (\hsc{Monad}\:\tT) \To \tT\cdot\tA \to (\tA \to \tT\cdot\tB) \to \tT\cdot\tB
  \\[-\jot]
  & (\hsbind^\tT_{\tA,\tB}) = \hscoerce\:(\hsbindU^\tT)
\end{align*}
\caption{Isabelle \hsc{Monad} class}
\label{fig:class-monad}
\end{figure}

From the class axioms, we re-derive the rest of the original six laws for the monomorphic constants. Then we transfer all of the laws to the polymorphic constants, using the automated method described previously in Sec.~\ref{sec:functorplus}.

\subsection{Generic theorems about monads}

Using the polymorphic monad laws, we can proceed to prove further theorems about arbitrary monads---for example, a property about the strictness of the bind operator.

\begin{theorem}[Strict $\hsbind$]
\label{thm:strict-bind}
Bind is strict in its first argument, if its second argument is also strict: $k\:\bot = \bot \Longrightarrow \bot \hsbind k = \bot$.
\end{theorem}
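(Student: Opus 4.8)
The plan is to mirror the structure of the proof of Theorem~\ref{thm:strict-fmap} (Strict $\fmap$): I would establish $\bot \hsbind k \below \bot$ by a short chain of monotonicity and rewriting steps, and then conclude $\bot \hsbind k = \bot$ by antisymmetry, since $\bot \below \bot \hsbind k$ holds trivially as $\bot$ is the least element.

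The central subtlety is that the na\"ive bound fails. We would like to dominate $\bot \hsbind k$ by $\bot \hsbind \hsc{return}$, which equals $\bot$ by the right unit law~\eqref{eq:right-unit}; but a monotonicity step to that bound would require $k \below \hsc{return}$, which need not hold for an arbitrary strict $k$. The key idea is therefore to first replace the first argument $\bot$ by $\fmap\:g\:\bot$, where $g$ is the constant-bottom function $g\:x = \bot$ (exactly the auxiliary function used in Theorem~\ref{thm:strict-fmap}). Since $\bot$ is least we have $\bot \below \fmap\:g\:\bot$, so monotonicity of $(\hsbind)$ in its first argument gives $\bot \hsbind k \below (\fmap\:g\:\bot) \hsbind k$. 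Now the naturality law~\eqref{eq:bind-fmap} fires: $(\fmap\:g\:\bot)\hsbind k = \bot \hsbind (k \circ g)$. This is the crux, as routing through $\fmap\:g$ is what lets the naturality law move $g$ onto the continuation as a pre-composition.

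From here the remaining steps are routine. By strictness of $k$ we have $(k \circ g)\:x = k\:\bot = \bot$, so $k \circ g = \hsc{const}\:\bot$, and pointwise $\hsc{const}\:\bot \below \hsc{return}$ because $\bot \below \hsc{return}\:x$ for every $x$. Monotonicity of $(\hsbind)$ in its second argument then yields $\bot \hsbind (k \circ g) \below \bot \hsbind \hsc{return}$, and the right unit law~\eqref{eq:right-unit} collapses the right-hand side to $\bot$. Chaining these steps gives $\bot \hsbind k \below \bot$, which together with $\bot \below \bot \hsbind k$ finishes the proof.

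I expect the only genuine obstacle to be spotting the combination described in the second paragraph: monotonicity on its own is insufficient, and one must recognise that Eq.~\eqref{eq:bind-fmap} is precisely the law converting a $\fmap\:g$ on the first argument into a pre-composition $k \circ g$ on the continuation, which is what finally makes the right unit law applicable. Every other step (monotonicity in each argument, strictness of $k$, the pointwise inequality, and right unit) is mechanical, so I would expect the whole argument to go through in Isabelle using only the monad laws plus basic domain-theoretic facts, much as in Theorem~\ref{thm:strict-fmap}.
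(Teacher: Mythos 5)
Your proof is correct, but it takes a genuinely different and considerably longer route than the paper's. The paper bounds the \emph{first} argument rather than the second: since $\bot \below \hsc{return}\:\bot$, monotonicity gives $\bot \hsbind k \below \hsc{return}\:\bot \hsbind k$, and the \emph{left} unit law \eqref{eq:left-unit} collapses this to $k\:\bot$, which is $\bot$ by strictness --- three lines, no auxiliary function, no naturality law. You considered only weakening the second argument (correctly observing that $k \below \hsc{return}$ fails), overlooked the first-argument move, and instead reconstructed the detour of Theorem~\ref{thm:strict-fmap}: route through $\fmap\:g\:\bot$ for the constant-bottom $g$, apply the naturality law \eqref{eq:bind-fmap} to turn this into $\bot \hsbind (k \circ g)$, and finish with monotonicity and the right unit law \eqref{eq:right-unit}. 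That chain does go through, with one caveat you gloss over: $g$ must be typed as in Theorem~\ref{thm:strict-fmap}, i.e.\ $g :: \tB \to \tA$, so that after the naturality step the outer $\bot$ lives in $\tT\cdot\tB$ and $k \circ g :: \tB \to \tT\cdot\tB$ is comparable with $\hsc{return}_\tB$; with $g :: \tA \to \tA$ the comparison $k \circ g \below \hsc{return}$ would be ill-typed. As for what each approach buys: the paper's proof needs only monotonicity and the left unit law, while yours shows the result also follows from the right unit law plus naturality \eqref{eq:bind-fmap} without ever invoking left unit --- a mildly interesting remark about which axioms suffice, but strictly more machinery for the same theorem.
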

\begin{proof}
By antisymmetry, it suffices to show $\bot \hsbind k \below \bot$.
\begin{align*}
  \bot \hsbind k
  &\below \hsc{return}\:\bot \hsbind k & \justification{monotonicity} \\
  &= k\:\bot & \justification{left unit law} \\
  &= \bot & \justification{strictness of $k$} & \qedhere
\end{align*}
\end{proof}

Within the context of class \hsc{Monad}, we can also define derived monadic constants, such as \hsc{join}.
\begin{align*}
  & \hsc{join} :: (\hsc{Monad}\;\tT) \To \tT\cdot(\tT\cdot\tA) \to \tT\cdot\tA
  \\[-\jot]
  & \hsc{join}\;m = m \hsbind \hsid
\end{align*}
We can derive a collection of standard lemmas about \hsc{join} by unfolding its definition and rewriting with the monad laws. These lemmas will then be valid for any type in the \hsc{Monad} class.

\section{Instantiating type constructor classes}
\label{sec:instantiation}

Type constructor classes like \hsc{Functor} and \hsc{Monad} are already useful on their own: For example, we can use them to formalize generic Haskell monadic operations like \hsc{sequence} and \hsc{foldM}, and prove properties about them. Using the ordinary HOLCF Domain package with the right class constraints, we can also define higher-order type constructors:
\begin{equation*}
  \kwd{data}\:\hsc{Tree}(\tT,\tA) = \hsc{Tip} \mid \hsc{Node}\:\tA\:(\tT\cdot(\hsc{Tree}(\tT,\tA)))
\end{equation*}

This is good, but sooner or later, we will want to populate our constructor classes with some concrete instances. To show how a Tycon library user can define new functors and monads, we will now demonstrate the process with a recursive lazy list datatype.
\begin{equation*}
\kwd{data}\:\hsc{List}\:\tA = \hsc{Nil} \mid \hsc{Cons}\:\tA\:(\hsc{List}\:\tA)
\end{equation*}

The Domain package can handle this definition with no trouble. However, we do not want \hsc{List} to be an ordinary Isabelle type constructor, which can only appear in fully applied form. We want \hsc{List} as a first-class type constructor, i.e., an instance of class \hsc{Tycon}. We really want to write \emph{this} definition instead, which uses the type application operator:
\begin{equation*}
\kwd{data}\:\hsc{List}\cdot\tA = \hsc{Nil} \mid \hsc{Cons}\:\tA\:(\hsc{List}\cdot\tA)
\end{equation*}

The Tycon library now provides full automation for such type definitions, in the form of a new user-level type definition command. It works much like the HOLCF Domain package, and is implemented using much of the same code.

The process by which the Domain package defines new datatypes can be broken down roughly into four steps:
\begin{enumerate}
\item Define a deflation combinator, and use it to define a representable domain satisfying the domain equation.
\item Define constructors and related functions; generate theorems.
\item Define take function; derive induction rules.
\item Define map function; relate it to the deflation combinator.
\end{enumerate}
Defining a usable \hsc{Tycon} involves essentially the same four steps. However, some of the steps are adapted slightly to deal with the \hsc{Tycon} instance and the type application operator. We now describe how our new command completes each of the four steps to make \hsc{List} into a \hsc{Tycon} and \hsc{Functor}.

\begin{enumerate}

\item

Just like the Domain package, it constructs a deflation as a least fixed-point, based on the recursive domain equation. However, instead of defining a type $\hsc{List}\:\tA$ directly from this deflation, it defines \hsc{List} as a singleton type, and makes it an instance of class \hsc{Tycon}. The constructed deflation is used to define $\TC{\hsc{List}}$.
\begin{align}
\TC{\hsc{List}}(a) & = (\mu\hair{t}.\:\hsone_\D \oplus_\D (a_{\bot_\D} \otimes_\D t_{\bot_\D})) \\
\label{eq:repiso}
\REP{\hsc{List}\cdot\tA} & = \REP{\hsone \oplus (\tA_\bot \otimes (\hsc{List}\cdot\tA)_\bot)}
\end{align}
By unfolding the fixed-point, the desired domain equation \eqref{eq:repiso} is derived. It then follows that the coercions \hsc{absList} and \hsc{repList}, defined as shown here, form an isomorphism.
\begin{align*}
\hsc{absList}_\tA & = \hscoerce_{(\hsone \oplus (\tA_\bot \otimes (\hsc{List}\cdot\tA)_\bot), \hsc{List}\cdot\tA)} \\
\hsc{repList}_\tA & = \hscoerce_{(\hsc{List}\cdot\tA, \hsone \oplus (\tA_\bot \otimes (\hsc{List}\cdot\tA)_\bot))}
\end{align*}

\item

Using these isomorphism theorems, a component of the Domain package is called to generate the multitude of definitions and theorems related to the constructors \hsc{Nil} and \hsc{Cons}. This step works exactly the same as with ordinary domain definitions.

\item

A call to another Domain package component generates a chain of \hsc{listTake} functions:
\begin{align*}
  &
  \begin{alignedat}{3}
    & \hsc{listTake} :: \hsc{Nat} \to \hsc{List}\cdot\tA \to \hsc{List}\cdot\tA
  \end{alignedat}
  \\[-\jot] &
  \begin{alignedat}{3}
    & \hsc{listTake}\:0 && \hsc{xs} &&= \bot \\[-\jot]
    & \hsc{listTake}\,(n+1)\:&&\hsc{Nil} &&= \hsc{Nil}\\[-\jot]
    & \hsc{listTake}\,(n+1)\:&&(\hsc{Cons}\:x\:\hsc{xs})
    &&= \hsc{Cons}\:x\:(\hsc{listTake}\:n\:\hsc{xs})
  \end{alignedat}
\end{align*}
By reasoning about the deflation agreement relation $(\isodefl)$, we can show
$\bigsqcup_n \hsc{listTake}\:n = \hsid$ from the definitions of \hsc{listTake} and the deflation combinator. From this, the approximation lemma \cite{Hutton2001} and induction rules are then derived, just as they are in the Domain package.

\item

The final step is to instantiate the \hsc{Functor} class. The $\fmapU$ function is defined in a stylized way, which exactly matches the structure of the definition of $\TC{\hsc{List}}$.
\begin{multline}
\fmapU^\hsc{List}f = (\mu\hair{t}.\:\hsc{absList} \circ{} \\[-\jot]
\mapSum(\hsid_\hsone, \mapProd(\mapLift\:f, \mapLift\:t)) \circ \hsc{repList})
\end{multline}
The Domain package would normally generate the same definition, but would define it as a separate constant \hsc{mapList}.

\begin{figure}
\begin{mathpar}
\hsid_\tA \isodefl \REP{\tA}
\and
\inferrule
    {f \isodefl d \\ \REP{\tA} = \REP{\hair\tB}}
    {\hscoerce_{\tB,\tA} \circ f \circ \hscoerce_{\tA,\tB} \isodefl d}
\and
\inferrule{f \isodefl d}{\mapLift(f) \isodefl (d_{\bot_\D})}
\and
\inferrule{f_1 \isodefl d_1 \\ f_2 \isodefl d_2}{\mapSum(f_1, f_2) \isodefl (d_1 \oplus_\D d_2)}
\and
\inferrule{f_1 \isodefl d_1 \\ f_2 \isodefl d_2}{\mapProd(f_1, f_2) \isodefl (d_1 \otimes_\D d_2)}
\and
\inferrule{f \isodefl d}{\fmap^\tT\,f \isodefl \TC{\tT}\,d}
\end{mathpar}
\caption{Agreement rules between map functions and deflations}
\label{fig:agreement}
\end{figure}

The \hsc{Functor} class requires a proof of the agreement law $\fmapU^\hsc{List}d \isodefl \TC{\hsc{List}}\,d$. Because the definitions of $\fmapU^\hsc{List}$ and $\TC{\hsc{List}}$ have the same structure, the proof can be discharged using a collection of structural rules, some of which are listed in Fig.~\ref{fig:agreement}. The Domain package maintains this list of rules for use in its own internal proofs \cite[\S6.6]{holcf11}.

\end{enumerate}

It is not always possible to automatically prove the functor composition law: For some strict datatypes, the composition law can fail when used with non-strict functions. To avoid this difficulty, we split off a separate \hsc{Prefunctor} superclass that asserts only the identity law. Our new command can then always succeed in generating a \hsc{Prefunctor} instance for each new datatype; we leave it to the user instantiate the \hsc{Functor} class by proving the composition law separately.

For the \hsc{List} type constructor, composition can be proved using the ordinary HOLCF technique of induction over the datatype.

\paragraph{Further class instantiations.}

Compared to \hsc{Tycon} and \hsc{Functor}, instantiations of subclasses like \hsc{FunctorPlus} and \hsc{Monad} are relatively straightforward. We write definitions of $(\hsappU)$, $\univ{\hsc{return}}$, and $(\hsbindU)$ using ordinary user-level methods: the standard Isabelle definition command for non-recursive functions, and the HOLCF Fixrec package \cite{holcf11} for the recursive ones. The class axioms for these subclasses are all ordinary equations, so they can be proved using ordinary techniques like induction.

\paragraph{Transferring theorems.}

We now have a type constructor \hsc{List} with instances of the \hsc{Functor}, \hsc{FunctorPlus}, and \hsc{Monad} classes. This means that we can use the polymorphic functions $\fmap$, $(\hsapp)$, \hsc{return}, and $(\hsbind)$ at type $\hsc{List}\cdot\tA$. We can also apply any generic theorems from those classes to the \hsc{List} type.

However, we do not have any \hsc{List}-specific theorems about the polymorphic functions yet. For example, if $\hsc{Cons}\;x\;\hsc{xs} \hsappU \hsc{ys} = \hsc{Cons}\;x\;(\hsc{xs} \hsappU \hsc{ys})$ is one of the defining equations for $(\hsappU)$, we should like to have a version of this theorem for $(\hsapp)$ as well.

To obtain the polymorphic versions of such lemmas, we need to do a transfer process, much like we did with Theorem~\ref{thm:functor-composition} and for the class axioms in Sec.~\ref{sec:functorplus}. The proofs can generally be completed with one call to the simplifier, using a collection of simplification rules for coercions. To transfer theorems that mention \hsc{Nil} or \hsc{Cons}, we must first prove some additional simplification rules stating that \hsc{coerce} commutes with those data constructors. These proofs are also simple, and potentially could be generated automatically.

\section{Verifying monad transformers}
\label{sec:monad-transformers}

In addition to simple type constructors like \hsc{List}, the Tycon library can also be used to define \hsc{Tycon} instances with additional type parameters, some of which may be type constructors themselves. In particular, this means that we can define a monad transformer---i.e., a monad that is parameterized by another inner monad.

The resumption monad transformer was covered in our previous work \cite{HMW2005}, but we have some improvements here. With the improved class definitions and better proof automation, we can now prove more with less effort: In addition to instantiating the monad class, we also proceed to define an interleaving operator and prove laws about it.

The new automation provided by the Tycon library has made it easier to test out definitions of new type constructors. Experimentation with the error and writer monad transformers has revealed that neither one truly preserves the monad laws. However, we have also found that the monad laws for both of those types actually \emph{are} preserved for values constructible from standard operations. That is, it is possible to view each as an abstract datatype whose operations maintain an invariant; in this abstract view, each one actually does form a lawful monad.

\subsection{Resumption monad transformer}
\label{sec:ResT}

The resumption monad transformer \cite{Papaspyrou2001} augments an inner monad with the ability to suspend, resume, and interleave threads of computations. The Haskell definitions for the resumption monad transformer are shown in Fig.~\ref{fig:resT}. (Note that although we call it a monad transformer, the \hsc{Monad} instance only requires $\tT$ to be a functor.)

\newcommand{\apRT}{\mathbin{\circledast}}

\begin{figure}
\begin{align*}
  & \kwd{data}\;\hsc{ResT}\;\tT\;\tA = \hsc{Done}\;\tA \mid \hsc{More}\;(\tT\;(\hsc{ResT}\;\tT\;\tA))
  \\[\jot]
  & \kwd{instance}\;(\hsc{Functor}\;\tT) \To \hsc{Monad}\;(\hsc{ResT}\;\tT)\;\kwd{where}
  \\[-\jot]
  & \hspace{8pt}
  \begin{alignedat}{2}
    & \hsc{return}\;x &&= \hsc{Done}\;x \\[-\jot]
    & \hsc{Done}\;x \hsbind k &&= k\;x \\[-\jot]
    & \hsc{More}\;m \hsbind k &&= \hsc{More}\;(\fmap\;(\lambda{r}.\;r \hsbind k)\;m)
  \end{alignedat}
\end{align*}
\caption{Haskell definition of \hsc{ResT} monad transformer}
\label{fig:resT}
\end{figure}

The constructor $\hsc{Done}\:x$ represents a computation that has run to completion, yielding the result $x$. $\hsc{More}\:c$ represents a suspended computation that still has more work to do: When $c$ is evaluated, it may produce some side effects (according to the monad $\tT$) and eventually returns a new resumption of type $\hsc{ResT}\:\tT\cdot\tA$. Resumptions are a bit like threads in a cooperative multitasking system: A running thread may either terminate ($\hsc{Done}\;x$) or voluntarily yield to the operating system, waiting to be resumed later ($\hsc{More}\;c$).

We formalize the Haskell type $\hsc{ResT}\;\tT\;\tA$ as $\hsc{ResT}\;\tT\cdot\tA$ in our library. The type constructor definition generates an $\fmap$ function satisfying these rules:
\begin{align*}
  \fmap\:f\:(\hsc{Done}\;x) &= \hsc{Done}\:(f\:x) \\[-\jot]
  \fmap\:f\:(\hsc{More}\;m) &= \hsc{More}\:(\fmap\:(\fmap\:f)\:m)
\end{align*}

From the low-level principle of take induction, we derive a high-level induction rule for type $\hsc{ResT}\:\tT\cdot\tA$: 
\begin{equation}
\inferrule
{\mathrm{admissible}(P)
\\ \forall{x}.\,P(\hsc{Done}\:x)
\\ P(\bot)
\\ \forall{m}\,{f}.\,(\forall{r}.\,P(f\:r)) \Longrightarrow P(\hsc{More}\:(\fmap\:f\:m))}
{\forall{r}.\,P(r)}
\end{equation}
We then proceed to instantiate the \hsc{Monad} class for $\hsc{ResT}\;\tT$; the proofs of the monad laws are all proved using the high-level induction rule. With this class instance, we have shown that \hsc{ResT} is a valid monad transformer.

\begin{figure}
\begin{align*}
  & \hair(\apRT) :: (FunctorPlus\;\tT) \To {} \\[-\jot]
  & \hspace{16pt}
  \hsc{ResT}\;\tT\;(\tA\to\tB) \to \hsc{ResT}\;\tT\;\tA \to \hsc{ResT}\;\tT\;\tB
  \\[-\jot]
  &
  \begin{alignedat}{3}
    & \hsc{Done}\;f &&\apRT \hsc{Done}\;x &&= \hsc{Done}\;(f\;x)
    \\[-\jot]
    & \hsc{Done}\;f &&\apRT \hsc{More}\;v &&=
    More\;(\fmap\;(\lambda{r}.\;\hsc{Done}\;f \apRT r)\;v)
    \\[-\jot]
    & \hsc{More}\;u &&\apRT \hsc{Done}\;x &&=
    More\;(\fmap\;(\lambda{r}.\;r \apRT \hsc{Done}\;x)\;u)
    \\[-\jot]
    & \hsc{More}\;u &&\apRT \hsc{More}\;v &&=
    \hsc{More}\;(\fmap\;(\lambda{r}.\; \hsc{More}\;u \apRT r)\;v
    \\[-\jot]
    &&&&& \hspace{22pt} {} \hsapp \fmap\;(\lambda{r}.\;r \apRT More\;v)\;u)
  \end{alignedat}
\end{align*}
\caption{Haskell definition of interleaving operator for \hsc{ResT}}
\label{fig:interleave}
\end{figure}

Some new features of our library are nicely demonstrated by the definition and verification of an interleaving operator for resumptions \cite{Papaspyrou2001}. The Haskell definition can be seen in Fig.~\ref{fig:interleave}. If both arguments are \hsc{Done}, then we combine the results and terminate.\footnote{We combine the results with function application so that we get an applicative functor; in other contexts a pair constructor might make more sense.} While either argument is \hsc{More}, we nondeterministically choose one such argument, run it for one step, and then recurse. Note that the definition uses a \hsc{FunctorPlus} class constraint---a type class whose formalization was made possible by the new Tycon library.

It turns out that $(\apRT)$ satisfies all the laws of an applicative functor \cite{McBride2008}. The trickiest to prove is the associativity law:
\begin{equation}
  \hsc{Done}\;(\circ) \apRT u \apRT v \apRT w = u \apRT (v \apRT w)
\end{equation}

The proof proceeds by nested inductions on $u$, $v$, and $w$; subproofs for the non-trivial cases rely on the naturality and associativity laws from the \hsc{FunctorPlus} class. A formalization of the same theorem was presented in the author's Ph.D thesis \cite{holcf11}, although there it was defined with a fixed inner monad. This version is more general and more abstract. We assume exactly what we need to about the type constructor $\tT$, nothing more.

\subsection{Error monad transformer}

The error monad transformer appears in Andy Gill's mtl library, inspired by Jones \cite{Jones1995}. It is simply a composition of the inner monad with an ordinary error monad. The Haskell definition of the \hsc{Error} monad that we use is shown in Fig.~\ref{fig:error}. It is parameterized by an extra type $\tE$, the type of error values.

\begin{figure}
\begin{align*}
  & \kwd{data}\;\hsc{Error}\;\tE\;\tA =
  \hsc{Err}\;\tE \mid \hsc{Ok}\;\tA
  \\[\jot]
  & \kwd{instance}\:\hsc{Functor}\:(\hsc{Error}\:\tE)\:\kwd{where}
  \\[-\jot]
  & \hspace{8pt}
  \begin{alignedat}{2}
    & \fmap\:f\:(\hsc{Err}\;e) &&= \hsc{Err}\;e \\[-\jot]
    & \fmap\:f\:(\hsc{Ok}\;a) &&= \hsc{Ok}\;(f\:a)
  \end{alignedat}
  \\[\jot]
  & \kwd{instance}\:\hsc{Monad}\:(\hsc{Error}\:\tE)\:\kwd{where}
  \\[-\jot]
  & \hspace{8pt}
  \begin{alignedat}{2}
    & \hsc{return}\;a &&= \hsc{Ok}\;a \\[-\jot]
    & \hsc{Err}\;e \hsbind k &&= \hsc{Err}\;e \\[-\jot]
    & \hsc{Ok}\;a \hsbind k &&= k\;a
  \end{alignedat}
\end{align*}
\caption{Haskell definition of \hsc{Error} monad}
\label{fig:error}
\end{figure}

\begin{figure}
\begin{align*}
  & \kwd{newtype}\:\hsc{ErrorT}\;\tE\;\tT\;\tA = \hsc{ErrorT}\:\{ \runET::\tT\:(\hsc{Error}\;\tE\;\tA) \}
  \\[\jot]
  & \kwd{instance}\:(\hsc{Monad}\:\tT) \To \hsc{Functor}\:(\hsc{ErrorT}\:\tE\:\tT)\:\kwd{where}
  \\[-\jot]
  & \hspace{8pt} \fmap\:f\:(\hsc{ErrorT}\:t) = \hsc{ErrorT}\,(\fmap\:(\fmap\:f)\:t)
  \\[\jot]
  & \kwd{instance}\:(\hsc{Monad}\:\tT) \To \hsc{Monad}\:(\hsc{ErrorT}\:\tE\:\tT)\:\kwd{where}
  \\[-\jot]
  & \hspace{8pt}
  \begin{alignedat}{2}
    & \hsc{return}\:a &&= \hsc{ErrorT}\,(\hsc{return}\,(\hsc{Ok}\;a)) \\[-\jot]
    & m \hsbind k &&= \hsc{ErrorT}\,(\runET\:m \hsbind \lambda\,n. \\[-\jot]
    & && \hspace{24pt} \mathrm{case}\;n\;\mathrm{of}\;
    \begin{aligned}[t]
      \hsc{Err}\;e &\to \hsc{return}\,(\hsc{Err}\;e) \\[-\jot]
      \hsc{Ok}\;a &\to \runET\,(k\;a))
    \end{aligned}
  \end{alignedat}
\end{align*}
\caption{Haskell definition of \hsc{ErrorT} monad transformer}
\label{fig:errorT}
\end{figure}

We define an instance $\hsc{Monad}\:(\hsc{Error}\:\tE)$ using the standard procedure outlined in Sec.~\ref{sec:instantiation}. The formal proofs of the monad laws proceed as expected. The resulting error monad type satisfies the following domain equation:
\begin{equation}
\REP{\hsc{Error}\:\tE\cdot\tA} = \REP{\tE_\bot\oplus\hair\tA_\bot}
\end{equation}

Using the error monad type, we can now proceed to define the error monad transformer.
We follow the Haskell definitions from Fig.~\ref{fig:errorT}, defining \hsc{ErrorT} as a newtype (i.e., a datatype with a single strict constructor).
\begin{align*}
& \kwd{newtype}\:(\hsc{Monad}\:\tT) \To \hsc{ErrorT}(\tE,\tT)\cdot\tA \\[-\jot]
& \hspace{8pt} = \hsc{ErrorT}\:\{\,\runET :: \tT\cdot(\hsc{Error}\:\tE\cdot\tA)\,\}
\end{align*}

The HOLCF error transformer type satisfies the following domain equation. Note that as a newtype, the right-hand side of its domain equation is not lifted.
\begin{equation}
\REP{\hsc{ErrorT}(\tE,\tT)\cdot\tA} = \REP{\tT\cdot(\hsc{Error}\:\tE\cdot\tA)}
\end{equation}

Building an instance of $\hsc{Functor}\,(\hsc{ErrorT}(\tE,\tT))$ in the standard way, we get a definition of $\fmap$ that satisfies the following rule, as we would expect:
\begin{multline}
\fmap^{(\hsc{ErrorT}(\tE,\tT))} f\,(\hsc{ErrorT}\:t) =\\[-\jot]
\hsc{ErrorT}\,(\fmap^\tT (\fmap^{\hsc{Error}(\tE)} f)\:t)
\end{multline}

\paragraph{Problems with monad instance.}

Unfortunately, we run into difficulty when trying to prove an instance of $\hsc{Monad}\,(\hsc{ErrorT}(\tE,\tT))$. Not all of the class axioms are provable. The \hsc{Monad} class will not let us define constants \hsc{return} and $(\hsbind)$ that do not satisfy the laws, so instead we define the \hsc{return} and $(\hsbind)$ from Fig.~\ref{fig:errorT} as separate constants \hsc{unitET} and \hsc{bindET}. These and other HOLCF definitions for the error monad transformer type are shown in Fig.~\ref{fig:ET}.

\begin{figure*}
\begin{align*}
& \hsc{unitET}::(\hsc{Monad}\:\tT)\To\tA\to\hsc{ErrorT}(\tE,\tT)\cdot\tA
\\[-\jot]
& \hsc{unitET}\:a = \hsc{ErrorT}\,(\hsc{return}^\tT\,(\hsc{Ok}\;a))
\\[4pt]
& \hsc{bindET}::(\hsc{Monad}\:\tT)\To\hsc{ErrorT}(\tE,\tT)\cdot\tA\to(\tA\to\hsc{ErrorT}(\tE,\tT)\cdot\tB)\to\hsc{ErrorT}(\tE,\tT)\cdot\tB
\\[-\jot]
& \hsc{bindET}\:m\:k = \hsc{ErrorT}\,(\runET\:m \hsbind^\tT \lambda\,{x}.\:
\mathrm{case}\;x\;\mathrm{of}\;\hsc{Err}\;e \to \hsc{return}^\tT\,(\hsc{Err}\;e); \hsc{Ok}\;a \to \runET\:(k\;a))
\\[4pt]
& \hsc{liftET}::(\hsc{Monad}\:\tT) \To \tT\cdot\tA\to\hsc{ErrorT}(\tE,\tT)\cdot\tA
\\[-\jot]
& \hsc{liftET}\:t = \hsc{ErrorT}\,(\fmap^\tT\,\hsc{Ok}\:t)
\\[4pt]
& \hsc{throwET}::(\hsc{Monad}\:\tT)\To\tE\to\hsc{ErrorT}(\tE,\tT)\cdot\tA
\\[-\jot]
& \hsc{throwET}\:e = \hsc{ErrorT}\,(\hsc{return}^\tT\,(\hsc{Err}\;e))
\\[4pt]
& \hsc{catchET}::(\hsc{Monad}\:\tT)\To\hsc{ErrorT}(\tE,\tT)\cdot\tA\to(\tE\to\hsc{ErrorT}(\tE,\tT)\cdot\tA)\to\hsc{ErrorT}(\tE,\tT)\cdot\tA
\\[-\jot]
& \hsc{catchET}\:m\:h = \hsc{ErrorT}\:(\runET\:m \hsbind^\tT \lambda\,{x}.\:
\mathrm{case}\;x\;\mathrm{of}\;\hsc{Err}\;e \to \runET\,(h\;e); \hsc{Ok}\;a \to \hsc{return}^\tT\,(\hsc{Ok}\;a))
\end{align*}
\caption{Isabelle definitions of error monad transformer operations}
\label{fig:ET}
\end{figure*}

Using this collection of non-overloaded constants, we can examine in detail the situations where the laws fail. In fact, most of the expected laws, e.g. the left unit law, do hold in general. All of the lemmas shown below can be proven by showing that $\runET$ applied to each side yields the same value.
\begin{gather}
\hsc{bindET}\:(\hsc{unitET}\:a)\:k = k\;a \\
\hsc{catchET}\:(\hsc{throwET}\:e)\:h = h\;e \\
\hsc{bindET}\:(\hsc{throwET}\:e)\:k = \hsc{throwET}\:e \\
\hsc{catchET}\:(\hsc{unitET}\:a)\:h = \hsc{unitET}\:a \\
\hsc{liftET}\:(\hsc{return}^\tT\:a) = \hsc{unitET}\:a \\
\hsc{liftET}\:(t \hsbind^\tT k) = \hsc{bindET}\:(\hsc{liftET}\:t)\:(\hsc{liftET} \circ k)
\end{gather}
A more involved proof shows that associativity also holds for \hsc{bindET}.

\begin{theorem}
\label{thm:errorT-assoc}
The error monad transformer satisfies the monad associativity law.
\begin{equation*}
\hsc{bindET}\:(\hsc{bindET}\:m\;h)\:k
= \hsc{bindET}\:m\:(\lambda{a}.\:\hsc{bindET}\:(h\;a)\:k)
\end{equation*}
\end{theorem}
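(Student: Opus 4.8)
The plan is to exploit the fact that \hsc{ErrorT} is declared as a newtype, so that $\runET$ and $\hsc{ErrorT}$ form an isomorphism with $\runET\,(\hsc{ErrorT}\:t) = t$ and $\hsc{ErrorT}$ injective. It therefore suffices to prove that $\runET$ applied to each side of the equation yields the same value of type $\tT\cdot(\hsc{Error}\:\tE\cdot\tC)$. To keep both sides manageable I would abbreviate the continuation occurring inside \hsc{bindET}, writing $c_k \defeq \lambda{x}.\,\mathrm{case}\;x\;\mathrm{of}\;\hsc{Err}\;e \to \hsc{return}^\tT(\hsc{Err}\;e);\,\hsc{Ok}\;a \to \runET\,(k\;a)$, so that unfolding the definition in Fig.~\ref{fig:ET} always gives $\runET\,(\hsc{bindET}\:m\:k) = \runET\:m \hsbind^\tT c_k$.

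First I would rewrite the left-hand side. Unfolding the outer and then the inner \hsc{bindET}, $\runET$ of the left-hand side becomes $(\runET\:m \hsbind^\tT c_h) \hsbind^\tT c_k$, a pair of nested binds in the inner monad $\tT$. Applying the inner monad's associativity law, Eq.~\eqref{eq:bind-bind}, collapses this to $\runET\:m \hsbind^\tT (\lambda{x}.\;c_h\:x \hsbind^\tT c_k)$. On the right-hand side, unfolding the single \hsc{bindET} gives $\runET\:m \hsbind^\tT c_{(\lambda{a}.\,\hsc{bindET}\,(h\,a)\,k)}$; expanding the \hsc{bindET} hidden inside that continuation rewrites its $\hsc{Ok}\;a$ branch to $\runET\,(h\:a) \hsbind^\tT c_k$. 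Both sides are now a single inner bind of $\runET\:m$, so the goal reduces, by extensionality, to showing that the two continuation functions agree at every argument $x :: \hsc{Error}\:\tE\cdot\tA$.

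I would finish by exhaustive case analysis on $x$, using the fact that $\hsc{Error}\:\tE\cdot\tA$ is exhausted by $\bot$, $\hsc{Err}\:e$, and $\hsc{Ok}\:a$. In the $\hsc{Ok}\:a$ case both continuations reduce immediately to $\runET\,(h\:a) \hsbind^\tT c_k$. In the $\hsc{Err}\:e$ case the left continuation is $c_h\,(\hsc{Err}\:e) \hsbind^\tT c_k = \hsc{return}^\tT(\hsc{Err}\:e) \hsbind^\tT c_k$, which collapses to $c_k\,(\hsc{Err}\:e) = \hsc{return}^\tT(\hsc{Err}\:e)$ by the inner left-unit law, Eq.~\eqref{eq:left-unit}, matching the right side.

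The main obstacle is the $\bot$ case, which is the one step easy to overlook. Because the strict-sum case analysis is strict in its scrutinee, both $c_h\,\bot = \bot$ and $c_k\,\bot = \bot$; the left continuation then becomes $\bot \hsbind^\tT c_k$, and this can only be discharged to $\bot$ by invoking strictness of the inner bind. This is exactly Theorem~\ref{thm:strict-bind}: since $c_k\,\bot = \bot$, we obtain $\bot \hsbind^\tT c_k = \bot$, matching the likewise-strict right continuation. Thus the law does hold for every lawful inner monad, but its correctness genuinely rests on the inner bind being strict whenever its continuation is, which is the subtle point that the proof must make explicit.
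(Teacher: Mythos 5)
Your proposal is correct and follows essentially the same route as the paper's own proof: apply $\runET$ to both sides, abbreviate the case-expression continuation (your $c_k$ is the paper's $R(k)$), rewrite the left-hand side with the inner monad's associativity law~\eqref{eq:bind-bind}, reduce to pointwise equality of the two continuations, and finish by case analysis on $\bot$, $\hsc{Err}\;e$, and $\hsc{Ok}\;a$, invoking Theorem~\ref{thm:strict-bind} in the $\bot$ case. Your write-up is in fact slightly more explicit than the paper's, spelling out the left-unit step in the $\hsc{Err}$ case and the strictness reasoning that the paper compresses into ``note that $R(k)$ is strict.''
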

\begin{proof}
Let $R(k)$ abbreviate the lambda expression in the definition of \hsc{bindET}, so that $\hsc{runET}\:(\hsc{bindET}\:m\:k) = \hsc{runET}\:m \hsbind R(k)$. Also note that $R(k)$ is strict. The proof then proceeds by applying \hsc{runET} to both sides of the equation. After simplification, we have:
\begin{multline*}
(\hsc{runET}\:m \hsbind R(h)) \hsbind R(k) \\[-\jot]
= \hsc{runET}\:m \hsbind R(\lambda{a}.\:\hsc{bindET}\:(h\:a)\:k)
\end{multline*}
After rewriting the left-hand side with the associativity law, both sides have the form $\hsc{runET}\:m \hsbind f$. It then suffices to show that the functions on both sides are equal for all arguments:
\begin{equation*}
\forall{x}.\: R(h)\:x \hsbind R(k) = R(\lambda{a}.\:\hsc{bindET}\:(h\:a)\:k)\:x
\end{equation*}
We proceed by cases on $x$. If $x = \bot$, then using Theorem~\ref{thm:strict-bind} we see that both sides reduce to $\bot$. If $x = \hsc{Err}\;e$, then both sides reduce to $\hsc{return}^\tau\,(\hsc{Err}\;e)$. Finally, if $x = \hsc{Ok}\;a$, then both sides evaluate to $\hsc{runET}\,(h\:a) \hsbind R(k)$.
\end{proof}

On the other hand, the right unit monad law is not satisfied in general. Unless the inner monad $\tT$ has a strict \hsc{return} function, $m = \hsc{ErrorT}\,(\hsc{return}\:\bot)$ is a counterexample to the right unit law.

\begin{theorem}
\label{thm:errorT-right-unit}
The error monad transformer satisfies the right unit law if and only if the inner monad has a strict \hsc{return}.
\begin{equation*}
(\forall{m}.\:\hsc{bindET}^\tT\,m\;\hsc{unitET}^\tT = m) \Longleftrightarrow
(\hsc{return}^\tT\,\bot = \bot)
\end{equation*}
\end{theorem}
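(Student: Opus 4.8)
The plan is to apply $\runET$ to the right-unit equation and reduce the whole biconditional to a statement about the inner monad $\tT$. Since $\hsc{ErrorT}$ is a newtype, $\runET$ is a continuous bijection, so $\forall m.\:\hsc{bindET}\:m\:\hsc{unitET} = m$ holds exactly when $\runET$ applied to both sides agrees for every $m$. Reusing the abbreviation $R(k)$ from the proof of Theorem~\ref{thm:errorT-assoc}, we have $\runET\,(\hsc{bindET}\:m\:\hsc{unitET}) = \runET\:m \hsbind^\tT R(\hsc{unitET})$, and because $\runET\:m$ ranges over all of $\tT\cdot(\hsc{Error}\:\tE\cdot\tA)$ as $m$ varies, the error-transformer right-unit law is equivalent to the inner-monad statement
\[
\forall t.\quad t \hsbind^\tT R(\hsc{unitET}) = t.
\]

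First I would compute $R(\hsc{unitET})$ explicitly by cases on its argument in the error type. Unfolding $\runET\,(\hsc{unitET}\:a) = \hsc{return}^\tT\,(\hsc{Ok}\;a)$, I find $R(\hsc{unitET})\,(\hsc{Err}\;e) = \hsc{return}^\tT\,(\hsc{Err}\;e)$ and $R(\hsc{unitET})\,(\hsc{Ok}\;a) = \hsc{return}^\tT\,(\hsc{Ok}\;a)$, while the strictness of the case analysis (already noted for $R(k)$ in Theorem~\ref{thm:errorT-assoc}) gives $R(\hsc{unitET})\,\bot = \bot$. Thus $R(\hsc{unitET})$ is precisely the \emph{strictification} of $\hsc{return}^\tT$: it agrees with $\hsc{return}^\tT$ on every non-bottom value and can differ from it only at $\bot$. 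The decisive observation is therefore that $R(\hsc{unitET}) = \hsc{return}^\tT$ as functions if and only if $\hsc{return}^\tT\,\bot = \bot$.

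For the ($\Leftarrow$) direction, assuming $\hsc{return}^\tT\,\bot = \bot$, the equality $R(\hsc{unitET}) = \hsc{return}^\tT$ turns the reduced goal into the inner monad's own right-unit law $t \hsbind^\tT \hsc{return}^\tT = t$, which holds since $\tT$ is a monad. For the ($\Rightarrow$) direction, I would instantiate the reduced law at the single witness $t = \hsc{return}^\tT\,\bot$, that is $m = \hsc{ErrorT}\,(\hsc{return}^\tT\,\bot)$. The inner \emph{left}-unit law collapses $(\hsc{return}^\tT\,\bot) \hsbind^\tT R(\hsc{unitET})$ to $R(\hsc{unitET})\,\bot$, which is $\bot$ by the strictness established above; since the right-unit law equates this with $\hsc{return}^\tT\,\bot$, I conclude $\hsc{return}^\tT\,\bot = \bot$.

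The main obstacle is the explicit case computation of $R(\hsc{unitET})$ together with the asymmetry between the two directions: proving the law requires the full inner right-unit law over all $t$, whereas refuting it needs only the inner left-unit law at the one strategically chosen value $\hsc{return}^\tT\,\bot$. Pinning down the strictness of $R(\hsc{unitET})$ and its agreement with $\hsc{return}^\tT$ away from $\bot$ is exactly what makes the biconditional hinge cleanly on $\hsc{return}^\tT\,\bot = \bot$.
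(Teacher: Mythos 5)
Your proposal is correct and follows essentially the same route as the paper's proof: apply $\runET$ to both sides, reuse the abbreviation $R(k)$, observe via case analysis on $\bot$, $\hsc{Err}\;e$, and $\hsc{Ok}\;a$ that $R(\hsc{unitET})$ agrees with $\hsc{return}^\tT$ except possibly at $\bot$, reduce the $(\Longleftarrow)$ direction to the inner monad's right unit law, and refute via the witness $m = \hsc{ErrorT}\,(\hsc{return}^\tT\,\bot)$ for $(\Longrightarrow)$. Your explicit remarks that $\runET$ is a bijection (so quantifying over $m$ is the same as quantifying over inner values $t$) and that $R(\hsc{unitET})$ is the strictification of $\hsc{return}^\tT$ merely make precise steps the paper leaves implicit.
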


\begin{proof}
Case $(\Longrightarrow)$: If we instantiate $m = \hsc{ErrorT}\:(\hsc{return}\:\bot)$, then the equation reduces to $\bot = \hsc{return}\:\bot$. Case $(\Longleftarrow)$: As above, let $R(k)$ abbreviate the lambda expression in the definition of \hsc{bindET}. We proceed to show $\hsc{bindET}\:m\:\hsc{unitET} = m$ by applying \hsc{runET} to both sides. After simplification, we get:
\begin{equation*}
\hsc{runET}\:m \hsbind R(\hsc{unitET}) = \hsc{runET}\:m
\end{equation*}

After expanding the right-hand side with the right unit law, both sides have the form $\hsc{runET}\:m \hsbind f$. It then suffices to show that the functions on both sides are equal for all arguments:
\begin{equation*}
\forall{x}.\:R(\hsc{unitET})\:x = \hsc{return}\;x
\end{equation*}

If $x = \bot$, then the equation reduces to $\bot = \hsc{return}\:\bot$, which we solve by assumption. In case $x = \hsc{Err}\;e$ or $x = \hsc{Ok}\;a$, then the equation reduces to a trivial equality.
\end{proof}

We could prove a monad class instance for the error transformer by creating a subclass for monads-with-strict-return, and putting a stronger constraint on type $\tT$:
\begin{equation*}
\kwd{instance}\:(\hsc{StrictMonad}\:\tT) \To \hsc{Monad}\:(\hsc{ErrorT}(\tE,\tT))
\end{equation*}
However, this is not very useful in practice, because most monads do not have a strict \hsc{return} function (although there are a few that do, e.g. the Identity monad and some varieties of powerdomains).

\paragraph{Data abstraction to the rescue.}

It turns out that it is impossible to construct the offending value $\hsc{ErrorT}\:(\hsc{return}\:\bot)$ using only the standard operations listed in Fig.~\ref{fig:ET}. Furthermore, we can show that for all values constructible using those operations, the monad laws do always hold. This means that when viewed as an abstract datatype, we could still consider \hsc{ErrorT} to be a valid monad.

We define an inductive set $\INV$ that includes all values that can be constructed with functions in the abstract interface (see Fig.~\ref{fig:invET}). We must also include rules for $\bot$ and lubs, to ensure that the set $\INV$ is a pcpo: In Haskell it is possible to define recursive values (i.e., least fixed-points) at any type, abstract or not.

\begin{figure}
\begin{mathpar}
\inferrule{}{\hsc{unitET}\:a \in \INV}
\and
\inferrule{m \in \INV \\ \forall{a}.\:k\;a \in \INV}{\hsc{bindET}\:m\:k \in \INV}
\and
\inferrule{}{\hsc{throwET}\:e \in \INV}
\and
\inferrule{m \in \INV \\ \forall{e}.\:h\;e \in \INV}{\hsc{catchET}\:m\:h \in \INV}
\and
\inferrule{}{\hsc{liftET}\:t \in \INV}
\and
\inferrule{}{\bot \in \INV}
\and
\inferrule{\forall{i}.\: m_i \in \INV}{\textstyle{\bigsqcup_i m_i} \in \INV}
\end{mathpar}
\caption{Inductive invariant based on \hsc{ErrorT} abstract interface}
\label{fig:invET}
\end{figure}

Finally, we can prove a restricted form of the right unit law by induction on $\INV$. The proof is straightforward, and uses techniques similar to those used for Theorems \ref{thm:errorT-assoc} and \ref{thm:errorT-right-unit}.
\begin{equation}
m \in \INV \implies \hsc{bindET}\;m\;\hsc{unitET} = m
\end{equation}

Besides using an inductive set, there is another, more direct way of defining the invariant. We can define $\INV$ simply as the set of all values satisfying the right unit law:
\begin{equation}
  \INV = \{\,m \mid \hsc{bindET}\;m\;\hsc{unitET} = m\,\}
\end{equation}
It turns out that $(\lambda{m}.\:\hsc{bindET}\;m\;\hsc{unitET})$ is actually a deflation, of which this version of $\INV$ is the corresponding set. (The reader may wish to verify that it is idempotent and below $\hsid$.)

Conveniently, we are already using deflations as our model of types. Therefore, we can use this deflation to define a new representable subtype of $\hsc{ErrorT}(\tE,\tT)\cdot\tA$ that is isomorphic to the set $\INV$. The representation of the new type $\hsc{ErrorT'}(\tE,\tT)\cdot\alpha$ as a deflation is therefore as follows:
\begin{equation}
\REP{\hsc{ErrorT'}(\tE,\tT)\cdot\tA}
= \hsemb \circ (\lambda\,m.\:\hsc{bindET}\:m\;\hsc{unitET}) \circ \hsprj
\end{equation}

We have implemented such a type definition using the Tycon library, and proven a \hsc{Monad} class instance for it. However, we do not yet have a principled technique for transferring definitions or theorems between the \hsc{ErrorT} and \hsc{ErrorT'} types, so working with such subtypes is impractical for casual users. Exploring ways to automate this process will be an area for future research.

\subsection{Writer monad transformer}

The writer monad allows a program to output a string (or more generally, any \hsc{Monoid} type) along with its ordinary result \cite{Jones1995}. The bind operation of the monad concatenates the strings output by each sub-computation. The writer monad transformer composes the writer monad with an inner monad, extending the inner monad with a string output capability. The Haskell definitions are shown in Fig.~\ref{fig:writerT}.

\begin{figure}
\begin{align*}
  & \kwd{class}\:\hsc{Monoid}\:\tW\:\kwd{where} \\[-\jot]
  & \hspace{8pt}
  \begin{alignedat}{2}
    & \mempty &&:: \tW \\[-\jot]
    & (\mappend) &&:: \tW \to \tW \to \tW
  \end{alignedat}
  \\[\jot]
  & \kwd{data}\;\hsc{Writer}\;\tW\;\tA = \hsc{Result}\;\tW\;\tA
  \\[\jot]
  & \kwd{newtype}\:\hsc{WriterT}\:\tW\:\tT\:\tA
  = \hsc{WriterT}\:\{ \hsc{runWT} :: \tT\:(\hsc{Writer}\:\tW\:\tA) \}
  \\[\jot]
  & \kwd{instance}\:(\hsc{Monoid}\:\tW, \hsc{Monad}\:\tT) \To \hsc{Monad}\:(\hsc{WriterT}\:\tW\:\tT)\:\kwd{where}
  \\[-\jot]
  & \hspace{8pt}
  \begin{alignedat}{2}
    & \hsc{return}\;a &&= \hsc{WriterT}\,(\hsc{return}\,(\hsc{Result}\;\mempty\;a)) \\[-\jot]
    & m \hsbind k &&= \hsc{WriterT}\,(
    \begin{aligned}[t]
      & \hsc{runWT}\:m \hsbind \lambda(\hsc{Result}\;w_1\;a). \\[-\jot]
      & \hsc{runWT}\:(k\;a) \hsbind \lambda(\hsc{Result}\;w_2\;b). \\[-\jot]
      & \hsc{return}\:(\hsc{Result}\;(w_1 \mappend w_2)\;b))
    \end{aligned}
  \end{alignedat}
  \\[\jot]
  & \hsc{tell} :: \tW \to \hsc{WriterT}\;\tW\;\tT\;() \\[-\jot]
  & \hsc{tell}\;w = \hsc{WriterT}\;(\hsc{return}\;(\hsc{Result}\;w\;()))
  \\[\jot]
  & \hsc{listen} :: \hsc{WriterT}\;\tW\;\tT\;\tA \to \hsc{WriterT}\;\tW\;\tT\;(\hsc{Writer}\;\tW\;\tA) \\[-\jot]
  & \hsc{listen}\;m = \hsc{WriterT}\:(
  \begin{aligned}[t]
    & runWT\;m \hsbind \lambda(\hsc{Result}\;w\;a). \\[-\jot]
    & \hsc{Result}\;w\;(\hsc{Result}\;w\;a))
  \end{aligned}
\end{align*}
\caption{Haskell definition of \hsc{WriterT} monad transformer}
\label{fig:writerT}
\end{figure}

The Haskell \hsc{Monoid} class has a set of customary axioms: Instances should ensure that $(\mappend)$ is associative, with $\mempty$ as the identity element, so $\mempty \mappend x = x \mappend \mempty = x$. Note that \hsc{Monoid} is not a constructor class, so we can formalize it as an ordinary Isabelle type class.

The formalization of the writer monad transformer works out in almost exactly the same way as the error monad transformer: The type definitions and \hsc{Functor} instances work fine, but the monad instance fails because neither the left nor the right unit law holds in general. To reason about \hsc{return} and \hsc{bind} without a \hsc{Monad} class instance, we define functions \hsc{unitWT} and \hsc{bindWT} according to the definitions in Fig.~\ref{fig:writerT}.

\begin{theorem}
\label{thm:writerT-right-unit}
The writer monad transformer satisfies the right unit law if and only if the inner monad has a strict \hsc{return}.
\begin{equation*}
(\forall{m}.\;\hsc{bindWT}^\tT\:m\;\hsc{unitWT}^\tT = m) \Longleftrightarrow
(\hsc{return}^\tT\,\bot = \bot)
\end{equation*}
\end{theorem}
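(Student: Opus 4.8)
This statement is the exact writer-transformer analogue of Theorem~\ref{thm:errorT-right-unit}, so the plan is to mirror that proof: apply $\hsc{runWT}$ to peel off the \hsc{WriterT} newtype and reduce both directions to facts about the inner monad $\tT$, handling the two extra wrinkles that the writer structure introduces (a nested inner bind and the monoid identity law). As in the error case, I would first abbreviate the binding continuation: let $R$ denote $\lambda(\hsc{Result}\;w_1\;a).\;\hsc{runWT}\,(\hsc{unitWT}\:a) \hsbind^\tT \lambda(\hsc{Result}\;w_2\;b).\;\hsc{return}^\tT\,(\hsc{Result}\,(w_1 \mappend w_2)\,b)$, so that unfolding the definitions of \hsc{bindWT} and \hsc{runWT} yields $\hsc{runWT}\,(\hsc{bindWT}\:m\;\hsc{unitWT}) = \hsc{runWT}\:m \hsbind^\tT R$. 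Note that $R$ is strict, since it pattern-matches on the \hsc{Result} constructor.

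For the forward direction $(\Longrightarrow)$, I would instantiate the hypothesis at the witness $m = \hsc{WriterT}\,(\hsc{return}^\tT\,\bot)$. Because $R$ is strict we have $R\,\bot = \bot$, so by the inner monad's left unit law $\hsc{return}^\tT\,\bot \hsbind^\tT R = R\,\bot = \bot$, and hence $\hsc{bindWT}\:m\;\hsc{unitWT} = \hsc{WriterT}\,\bot = \bot$. The assumed unit law then forces $m = \bot$, which upon applying $\hsc{runWT}$ reads exactly $\hsc{return}^\tT\,\bot = \bot$.

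For the backward direction $(\Longleftarrow)$, I would assume $\hsc{return}^\tT\,\bot = \bot$ and fix $m$. Applying $\hsc{runWT}$ to both sides of $\hsc{bindWT}\:m\;\hsc{unitWT} = m$ reduces the goal to $\hsc{runWT}\:m \hsbind^\tT R = \hsc{runWT}\:m$, which I would establish by proving $R = \hsc{return}^\tT$ pointwise and then invoking the right unit law of $\tT$. For $x = \hsc{Result}\;w_1\;a$, I would unfold $\hsc{runWT}\,(\hsc{unitWT}\:a) = \hsc{return}^\tT\,(\hsc{Result}\;\mempty\;a)$, collapse the inner bind using the left unit law of $\tT$, and simplify $w_1 \mappend \mempty$ to $w_1$ via the monoid right-identity law, giving $R\,x = \hsc{return}^\tT\,(\hsc{Result}\;w_1\;a) = \hsc{return}^\tT\,x$. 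For $x = \bot$, strictness of $R$ together with the assumption makes both sides $\bot$.

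The only genuine departure from the error-transformer proof is this pointwise analysis of $R$: whereas \hsc{ErrorT} needed a single case split, here I must additionally unfold the nested inner bind and chain together both the inner monad's left unit law and the monoid identity $w_1 \mappend \mempty = w_1$ before the two sides agree. I expect that case computation to be the main bookkeeping obstacle; everything else (the strictness of $R$, the use of $\hsc{runWT}$ to reduce to the inner monad, and the closing appeal to the inner right unit law) transfers essentially verbatim from Theorem~\ref{thm:errorT-right-unit}.
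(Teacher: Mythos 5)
Your proposal is correct and takes essentially the same route as the paper: the paper's own proof is simply ``Similar to Theorem~\ref{thm:errorT-right-unit},'' using exactly your witness $m = \hsc{WriterT}\,(\hsc{return}^\tT\,\bot)$ for the non-strict case, with the backward direction inherited from the error-transformer argument (apply $\hsc{runWT}$, reduce to pointwise equality of the continuation with $\hsc{return}^\tT$, and close with the inner right unit law). Your explicit handling of the nested inner bind via the left unit law and the monoid identity $w_1 \mappend \mempty = w_1$ is precisely the adaptation the paper leaves implicit.
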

\begin{proof}
Similar to Theorem~\ref{thm:errorT-right-unit}. In the case that \hsc{return} is not strict, instantiating $m = \hsc{WriterT}\:(\hsc{return}\:\bot)$ gives the counterexample.
\end{proof}

\begin{theorem}
\label{thm:writerT-left-unit}
The writer monad transformer satisfies the left unit law if and only if the inner monad has a strict \hsc{return}.
\begin{equation*}
(\forall{x}\:{k}.\;\hsc{bindWT}^\tT\:(\hsc{unitWT}^\tT\:x)\;k = k\;x) \Longleftrightarrow
(\hsc{return}^\tT\,\bot = \bot)
\end{equation*}
\end{theorem}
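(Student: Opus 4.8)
The plan is to follow the same template as Theorem~\ref{thm:writerT-right-unit} and its error-monad analogue Theorem~\ref{thm:errorT-right-unit}, proving the two implications separately. In both directions I would first apply \hsc{runWT} to each side of the candidate left unit law; since \hsc{WriterT} is a newtype with a strict constructor, \hsc{runWT} is one half of an isomorphism and hence injective, so equality of \hsc{WriterT} values is equivalent to equality after \hsc{runWT}. To keep the bookkeeping manageable I would abbreviate the body of \hsc{bindWT}, writing $\hsc{runWT}\,(\hsc{bindWT}\:m\:k) = \hsc{runWT}\:m \hsbind S(k)$, where $S(k)$ is the pattern-matching continuation read off from Fig.~\ref{fig:writerT}.

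For the $(\Longleftarrow)$ direction, assume $\hsc{return}^\tT\,\bot = \bot$. Starting from $\hsc{runWT}\,(\hsc{bindWT}\,(\hsc{unitWT}\:x)\:k)$, unfolding \hsc{unitWT} gives $\hsc{return}^\tT\,(\hsc{Result}\:\mempty\:x) \hsbind S(k)$; the inner monad's left unit law~\eqref{eq:left-unit} collapses this to $S(k)\,(\hsc{Result}\:\mempty\:x)$, which reduces to $\hsc{runWT}\,(k\:x) \hsbind (\lambda(\hsc{Result}\:w\:b).\,\hsc{return}^\tT\,(\hsc{Result}\,(\mempty\mappend w)\,b))$. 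The monoid left-identity law $\mempty \mappend w = w$ simplifies the accumulated output, leaving the residual continuation $g = \lambda(\hsc{Result}\:w\:b).\,\hsc{return}^\tT\,(\hsc{Result}\:w\:b)$. The crux is to show $g = \hsc{return}^\tT$: by exhaustiveness every value of $\hsc{Writer}\,\tW\cdot\tA$ is either $\bot$ or of the form $\hsc{Result}\:w\:b$; on the latter the two functions agree definitionally, while on $\bot$ the pattern match forces $g\,\bot = \bot$, which equals $\hsc{return}^\tT\,\bot$ exactly by the strictness assumption. With $g = \hsc{return}^\tT$ established, the inner right unit law~\eqref{eq:right-unit} finishes the reduction to $\hsc{runWT}\,(k\:x)$.

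For the $(\Longrightarrow)$ direction, I would exhibit a counterexample that isolates the strictness condition, paralleling the $m = \hsc{WriterT}\,(\hsc{return}\:\bot)$ witness used for the right unit law. Taking $k = \lambda y.\,\hsc{WriterT}\,(\hsc{return}^\tT\,\bot)$ (ignoring its argument $y$), so that $\hsc{runWT}\,(k\:x) = \hsc{return}^\tT\,\bot$, the same chain of reductions as above---now without invoking the strictness hypothesis---computes the left-hand side to $\hsc{return}^\tT\,\bot \hsbind g = g\,\bot = \bot$, whereas the right-hand side is $\hsc{runWT}\,(k\:x) = \hsc{return}^\tT\,\bot$. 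The assumed left unit law then forces $\bot = \hsc{return}^\tT\,\bot$, which is the desired conclusion.

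I expect the main obstacle to be the $\bot$ case in the $(\Longleftarrow)$ direction: establishing $g = \hsc{return}^\tT$ requires the strictness assumption together with a careful justification that the \hsc{Result}-pattern match is strict (so that $g\,\bot = \bot$) and that the single-constructor \hsc{Writer} type is exhausted by $\bot$ and $\hsc{Result}\:w\:b$. This is precisely the point at which the ``iff'' hinges, and it is the same subtlety that made Theorem~\ref{thm:strict-bind} and Theorems~\ref{thm:errorT-right-unit}--\ref{thm:writerT-right-unit} non-trivial; everything else is routine rewriting with the inner monad laws and the monoid identity.
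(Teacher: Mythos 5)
Your proposal is correct and matches the paper's proof in essence: the paper proves this theorem ``similar to Theorem~\ref{thm:writerT-right-unit}'' (apply \hsc{runWT} to both sides, reduce with the inner monad laws, and case-split on $\bot$ versus $\hsc{Result}\:w\:b$, with the $\bot$ case hinging on strictness of $\hsc{return}^\tT$), and it uses exactly your counterexample $k = \lambda{x}.\:\hsc{WriterT}\:(\hsc{return}\:\bot)$ for the forward direction. Your write-up simply fills in the routine details that the paper leaves implicit.
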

\begin{proof}
Similar to Theorem~\ref{thm:writerT-right-unit}. In case \hsc{return} is not strict, instantiating $k = \lambda{x}.\:\hsc{WriterT}\:(\hsc{return}\:\bot)$ gives the counterexample.
\end{proof}

As with the error monad transformer, we can define a subset of type consisting of those values that satisfy the right unit law:
\begin{equation}
  \INV = \{\,m \mid \hsc{bindWT}\;m\;\hsc{unitWT} = m\,\}
\end{equation}
It is straightforward to check that all writer transformer operations preserve this invariant, including \hsc{unitWT}, \hsc{bindWT}, and the formalized versions of \hsc{tell} and \hsc{listen}.

The reader may verify that the function $\lambda{m}.\;\hsc{bindWT}\:m\:\hsc{unitWT}$ is indeed a deflation. But we are not quite done showing that the subtype defined by $\INV$ is a monad: Because the left unit law does not hold universally for the writer transformer, we must also verify that all values in $\INV$ satisfy the left unit law as well.
\begin{equation}
k\;x \in \INV \implies \hsc{bindWT}\;(\hsc{unitWT}\;x)\;k = k\;x
\end{equation}
Unfolding the definition of $\INV$, we see that it is sufficient to show $\hsc{bindWT}\;(\hsc{unitWT}\;x)\;k = \hsc{bindWT}\;(k\;x)\;\hsc{unitWT}$. This can easily be proven by applying \hsc{runWT} to both sides and simplifying.

In summary, we have seen that the writer monad transformer is not quite a true monad, because the type contains values that do not respect the monad laws. But when we view it as an abstract datatype, with an interface that exports only operations that preserve the datatype invariant, it is valid to treat it as a real monad.

\section{Conclusions and related work}
\label{sec:conclusion}

The Tycon library for Isabelle/HOLCF is now available at the Archive of Formal Proofs \cite{AFP2012}. It allows users to define, reason about, and instantiate constructor classes with little effort. It models polymorphism using coercion from a universal domain, which allows it to work in ordinary higher-order logic.

A different domain-theoretic model of polymorphism is presented by Amadio and Curien \cite{amadio+curien}. Here, polymorphic functions are modeled as functions from types (i.e. deflations) to values. However, this model allows non-parametric polymorphic functions that depend non-trivially on the type argument. Also, building a Tycon library around this model would also require users to write explicit type abstractions and applications when instantiating constructor classes; it is not clear whether this would be practical for users.

Sozeau and Oury~\cite{Sozeau2008} have recently developed a type class mechanism for the Coq theorem prover. Coq has a powerful dependent type system that allows reasoning about type constructors, first-class polymorphic values and type quantification. They define a monad class, including monad laws. Their system has the capability to formalize the whole monad class hierarchy, and it appears that it could be used to verify monad transformers; however, we are unaware of any published work in that direction.

Formalizing monad transformers in Coq does have some limitations compared to the Tycon library. For example, Coq does not accept the type definition of the resumption monad transformer: To ensure the strict positivity requirement, indirect recursion can only be used with known type constructors, not a monad parameter. Another difference (not necessarily a limitation) is that Coq is a logic of terminating functions, and does not include notions of bottoms, strictness, or partial values. Results proved in such a logic must be interpreted differently.

Our earlier formalization of axiomatic constructor classes \cite{HMW2005} could express many of the same definitions as the current work. However, it did not provide as many features or as much automation for users. Instead of naturality laws, it used a deflation membership relation, written $x ::: d$, to express the fact that polymorphic functions had the right type. For example, the \hsc{Monad} class there had a rule for $\returnU$ that stated $\forall{x}:::d.\; \returnU\;x ::: \TC{\tau}(d)$. Transfer proofs to establish polymorphic laws were lengthy and unprincipled, making subclass definitions impractical for users. Automation for instantiating the \hsc{Functor} and \hsc{Monad} classes was present, but it required users to define $\fmapU$ on a separate copy of the datatype first. Users were then left without a good way to transfer properties to the new \hsc{Tycon} version of the type.

Automation for theorem transfer in the Tycon library is much smoother than it was in our earlier work, but there is still room for further improvement. Currently we rely on a set of rewrite rules, which works well in practice so far. However, the behavior of such rewriting strategies is often hard to predict, the rules were assembled in an ad hoc fashion, and we have no convincing reason to trust that the method will work in all situations.

A better approach would be to use a more principled theorem transfer method, like the quotient packages developed recently by Homeier \cite{Homeier2005} and Kaliszyk \& Urban \cite{Kaliszyk2011}. For functions respecting an equivalence relation, theorems can be transferred from the underlying ``raw'' type to a quotient type. In HOLCF, type $\U$ could be considered as a raw type, with a representable type $\tA$ as a quotient type; the $\hsprj$ function induces an equivalence relation on $\U$. The naturality laws for operations on $\U$ could then serve as the respectfulness theorems required by the quotient package.

\acks

Thanks to John Matthews for many discussions about HOLCF which helped to develop the ideas in this paper. Thanks also to Jasmin Blanchette for reading an early draft and providing helpful comments.

\bibliographystyle{plain}
\bibliography{root}

\end{document}